\theoremstyle{plain}
\newtheorem{theorem}{Theorem}[section]
\newtheorem{proposition}{Proposition}
\newtheorem{corollary}{Corollary}[section]
\newtheorem{assumption}{Assumption}[section]
\crefname{theorem}{Theorem}{Theorems}
\crefname{proposition}{Proposition}{Propositions}
\crefname{corollary}{Corollary}{Corollaries}
\theoremstyle{definition}
\crefname{definition}{Definition}{Definitions}
\crefname{remark}{Remark}{Remarks}
\def\EE{\mathbb{E}}
\def\PP{\mathbb{P}}
\def\calN{\mathcal{N}}
\def\1{\mathbbm{1}}
\def\var{\mathsf{Var}}
\theoremstyle{plain}
\def \var {\mathsf{Var}}
\definecolor{myblue}{rgb}{.8, .8, 1}
\definecolor{mathblue}{rgb}{0.2472, 0.24, 0.6} 
\definecolor{mathred}{rgb}{0.6, 0.24, 0.442893}
\definecolor{mathyellow}{rgb}{0.6, 0.547014, 0.24}
\pgfplotsset{compat=1.17}
\newcommand{\IFC}{\text{IF}}
\title{Trustworthy assessment of heterogeneous treatment effect estimator via analysis of relative error}
\author{
Zijun Gao
  \footnote{\small Marshall School of Business, University of Southern California, USA}
}
\begin{document}

\maketitle


\begin{abstract}
    Accurate heterogeneous treatment effect (HTE) estimation is essential for personalized recommendations, making it important to evaluate and compare HTE estimators. Traditional assessment methods are inapplicable due to missing counterfactuals.
    Current HTE evaluation methods rely on additional estimation or matching on test data, often ignoring the uncertainty introduced and potentially leading to incorrect conclusions. 
    We propose incorporating uncertainty quantification into HTE estimator comparisons.
    In addition, we suggest shifting the focus to the estimation and inference of the relative error between methods rather than their absolute errors.
    Methodology-wise, we develop a relative error estimator based on the efficient influence function and establish its asymptotic distribution for inference. Compared to absolute error-based methods, the relative error estimator (1) is less sensitive to the error of nuisance function estimators, satisfying a "global double robustness" property, and (2) its confidence intervals are often narrower, making it more powerful for determining the more accurate HTE estimator. 
    Through extensive empirical study of the ACIC challenge benchmark datasets, we show that the relative error-based method more effectively identifies the better HTE estimator with statistical confidence, even with a moderately large test dataset or inaccurate nuisance estimators.
\end{abstract}

\section{Introduction}\label{sec:introduction}

The estimation of heterogeneous treatment effects (HTE) under the Neyman-Rubin potential outcome framework is becoming increasingly prominent, driven by the need for tailored approaches in areas such as personalized medicine, personalized education, and personalized advertising. 
A variety of machine learning tools are being employed to estimate HTE, including LASSO, random forests, gradient boosting, and neural networks. 
Despite a rich body of research on HTE estimation, evaluating and comparing these estimators remain less investigated.
Evaluating the performance of HTE estimators is essential for identifying a better candidate, especially considering the wide range of HTE estimation methods available.

One significant challenge in assessing an HTE estimator arises from the inherent missingness in potential outcome model.
Provided with a test dataset, standard evaluation of a predictor compares the actual observations with the predicted values.
However, in the potential outcome model, each observed response corresponds to one potential outcome (treatment or control), and the HTE---the difference between two potential outcomes---is not directly observed.
To deal with the missingness of HTE, existing methods of HTE assessment typically involves additional steps performed on the test dataset, such as matching \parencite{rolling2014model} or nuisance function estimation \parencite{alaa2019validating}, to construct pseudo-observations of the HTE (\Cref{sec:literature}). 
The additional steps could introduce substantial randomness, which may even dominate the actual error difference between two HTE estimators.
Simply ignoring this source of randomness and outputting the estimator with the lower point error estimate could result in incorrect decisions with a significant probability.



In this paper, we advocate that the comparison of HTE estimators should account for the randomness introduced during the evaluation stage. 
Rather than providing a point estimate of the error, we suggest constructing a confidence interval for the evaluation error, which contains the true error value with a pre-specified probability.
We then demonstrate that the confidence interval for the absolute error of an HTE estimator (1) can be sensitive to nuisance function estimation on the test data; (2) for two similar HTE estimators, their absolute error confidence intervals do not account for the similarity of the HTE estimators and could be too wide to determine the more accurate estimator.
To address the issues of uncertainty quantification for absolute errors, we propose to directly construct confidence intervals for the \textit{relative} error between two HTE estimators, rather than their individual absolute errors. 
Methodologically, we derive an efficient estimator for the relative error using influence functions and characterize its asymptotic distribution to facilitate the confidence interval construction. 
Theoretically, we prove that our confidence interval of relative errors is valid under weaker assumptions regarding the quality of the nuisance function estimators compared to that of absolute errors, and is guaranteed to be narrow when the HTE estimators for comparison are similar.
Empirically, we show that the relative error confidence intervals achieves better coverage as well as are more powerful in identifying the better HTE estimator.
Beyond the difference in conditional means, our proposal can also be generalized to compare estimators of a broader class of heterogeneous treatment effect estimands more suitable for quantifying treatment effects for non-continuous outcomes.


\noindent\textbf{Organization}.
The paper is organized as follows. 
In \Cref{sec:background}, we present the problem formulation, introducing the potential outcome model, specifying the causal estimand, defining the absolute and relative errors for HTE estimators, and providing a review of the relevant literature. In \Cref{sec:absolute.error}, we provide the efficient estimator, confidence interval of absolute errors, and discuss the issues therein.
In \Cref{sec:relative.error}, we focus on relative errors, presenting the efficient estimator and its associated confidence interval, and explain how the issues with absolute error confidence intervals are avoided.
We also discuss the generalization of our proposal based on relative error to a broader class of causal estimands. 
In \Cref{sec:simulation}, we compare the confidence intervals for absolute and relative errors on a benchmark dataset from the 2016 ACIC challenge. 
In \Cref{sec:discussion}, we conclude with a summary and outline future research directions. 
All proofs are provided in the supplementary materials.


\section{Formulation and background}\label{sec:background}

\subsection{Potential outcome model}\label{sec:potential.outcome.model}

We follow the Neyman-Rubin potential outcome model with a treatment condition and a control condition.
For unit $i$, there is a $d$-dimensional covariate vector $X_i$, a binary treatment assignment $W_i \in \{0, 1\}$, and two potential outcomes: $Y_i(0)$ under the control condition and $Y_i(1)$ under the treatment condition.
We denote the observed outcome by $Y_i$, which equals $Y_i(1)$ if the unit is under treatment, and $Y_i(0)$, otherwise.
We use $Z_i = (X_i, W_i, Y_i)$ to denote all the observed data of unit $i$.

We make the conventional assumptions of SUTVA, overlap, and unconfoundedness.
\begin{assumption}[Stable unit treatment value assumption (SUTVA)]
    The potential outcomes for any unit do not depend on the treatments assigned to other units. There are no different versions of each treatment level.
\end{assumption}

\begin{assumption}[Unconfoundedness]
    The assignment mechanism does not depend on potential outcomes:
    \begin{align*}
        (Y_i(1), Y_i(0)) \perp W_i \mid X_i.
    \end{align*}
\end{assumption}

\begin{assumption}[Overlap]\label{assu:overlap}
    There is a positive probability of receiving treatment and control for all individuals.
\end{assumption}

To define our causal estimand, we introduce the super-population. 
We assume individuals are sampled i.i.d. from a super-population, denoted by $\PP$.
In particular, the covariates $X_i$ are sampled from an unknown distribution $\PP_X$. 
Given the covariates $X_i$, a binary group assignment $W_i \in \{0, 1\}$ is generated from a Bernoulli distribution with mean $e(X_i)$ (also known as the propensity score). 
Echoing with \Cref{assu:overlap}, we assume there exists $\eta > 0$  such that $\eta < e(x) < 1 - \eta$.
The potential outcomes are modeled by
\begin{align*}
    Y_i(1) | X_i &= \mu_1(X_i) + \epsilon_i, \\
    Y_i(0) | X_i &= \mu_0(X_i) + \epsilon_i,  
\end{align*}
where $\mu_0(x)$, $\mu_1(x)$ represent the conditional mean function of the control and the treatment group respectively, and $\epsilon_i$ denotes the error term, assumed to be zero-mean and independent of $X_i$. 
The estimand HTE is defined as 
\begin{align*}
    \tau(x) = \EE\left[Y(1) - Y(0) \mid X = x\right] 
    = \mu_1(x) - \mu_0(x),
\end{align*}
which can also be expressed as the difference of the two group conditional mean functions.

\subsection{Absolute error and relative error}\label{sec:error.definition}

In this paper, we focus on the evaluation and comparison of the HTE estimators using a test dataset of size $n$ drawn from the super-population $\PP$.
We use $\hat{\tau}_1(x)$, $\hat{\tau}_2(x)$ to denote HTE estimators derived independently of the test dataset.
When there is only one HTE estimator, we drop the subscript and use $\hat{\tau}(x)$.
We highlight that the HTE estimators are provided to us and the HTE estimation problem itself is not the focus of this paper.

To quantify the accuracy of an HTE estimator $\hat{\tau}(x)$, the absolute error is defined as
\begin{align}\label{defi:evaluation.error}
    \phi(\hat{\tau}(x))
    := \EE\left[(\hat{\tau}(X) - \tau(X))^2\right],
\end{align}
where the expectation is evaluated at the distribution $\PP_X$ of the covariates. 
A smaller absolute error suggests a more accurate HTE estimator, and a zero error implies $\hat{\tau}(X) = {\tau}(X)$ with probability one.
The relative error of two estimators $\hat{\tau}_1(x)$ and $\hat{\tau}_2(x)$ is quantified as the difference between their absolute errors 
\begin{align}\label{defi:evaluation.error.relative}
    \delta(\hat{\tau}_1, \hat{\tau}_2)
    := \phi(\hat{\tau}_1(x)) - \phi(\hat{\tau}_2(x))  
    = \EE\left[\hat{\tau}_1^2(X) - \hat{\tau}_2^2(X) - 2(\hat{\tau}_1(X) - \hat{\tau}_2(X)) \tau(X) \right].
\end{align}  
A negative $\delta(\hat{\tau}_1, \hat{\tau}_2)$ indicates that $\hat{\tau}_1(x)$ is more accurate; otherwise, $\hat{\tau}_2(x)$ is more accurate.
In \Cref{rmk:DINA}, we consider the treatment effects defined on the natural parameter scale suitable for binary, count, and survival responses, and extend the results regarding Eq.~\eqref{defi:evaluation.error} to the corresponding errors.

For standard predictor evaluation, absolute prediction errors are more commonly displayed than relative prediction errors.
However, for HTE estimators, we have the perhaps surprising observation that the relative error can often be approximated more accurately.
Intuitively, this is because the relative error $ \delta(\hat{\tau}_1, \hat{\tau}_2)$ is linear in the unobserved $\tau(x)$, while the absolute error $\phi(\hat{\tau}(x))$ also depends on the second moment of $\tau(x)$, and estimating the first moment of $\tau(x)$ is relatively easier that of the second moment. 
We provide rigorous characterizations and empirical comparison of the observation in the following sections.

\subsection{Literature of HTE evaluation}\label{sec:literature}

In this paper, we consider comparing HTE estimators with statistical confidence through relative error evaluation, which is different from the mainstream literature focusing on evaluating absolute errors without uncertainty quantification. 
Nevertheless, we provide a brief overview of existing methods on assessing the absolute performance of an HTE estimator.
One simple and common approach targets the observed response but not the treatment effect, and uses the standard prediction error of the response as the error measurement.
However, an accurate predictor of the treatment effect may not necessarily be associated with precise response predictors \parencite{curth2023search}.
Furthermore, for HTE estimators that directly estimate the difference without estimating the response, such as causal trees \parencite{athey2016recursive}, the response prediction error can not be computed.
Another thread of assessment approaches involves creating ``virtual twins'' by matching treated and control units and use the response difference of a pair as a pseudo-observation of the treatment effect \parencite{rolling2014model}.
However, matching is often computationally intensive \parencite{rosenbaum1989optimal}. Moreover, the complexity of matching algorithms makes it difficult to analyze statistically and perform downstream inference.
A third thread of methods estimate the HTE on the test dataset and compare it with the provided HTE estimators, which we refer to as plug-in estimators. This can be problematic because the evaluation error is affected by the error from the HTE estimator obtained from the test set, a nuisance function in this causal assessment task.
To reduce the impact of the error of the HTE estimator from the test data, bias correction methods based on influence functions have been developed \parencite{alaa2019validating}. 
The method is related to ours, but it does not address the uncertainty quantification of the estimated error.
Additionally, the influence function in \parencite{alaa2019validating} is different from our efficient proposal (comparison can be found in \Cref{sec:simulation}).

Our approach for drawing inference on relative error evaluation builds on the rapidly advancing body of work in semi-parametrics, particularly influence functions \parencite{van2000asymptotic, robins2008higher} (see \cite{kennedy2022semiparametric} for a review).  
In many causal problems, the causal quantity of interest is typically a scalar or low-dimensional, but the model contains infinite dimensional nuisance functions, making it a semi-parametric problem.
The influence function is a powerful tool for constructing the so-called one-step correction estimators that is more robust to the error of the estimators of nuisance components.
Given a specific function class that the true distribution belongs to, the estimator that attains the minimal asymptotic variance is known as the semi-parametrically efficient estimator, and the corresponding influence function is referred to as the efficient influence function. Despite the appealing statistical properties of efficient influence functions, the derivation of efficient influence functions is often case-specific, with only a few general rules and standard examples available \parencite{kennedy2022semiparametric}.

\section{Absolute error and issues}\label{sec:absolute.error}

In this section, we begin by discussing the estimation and inference of the absolute error using influence functions. We then highlight the undesirable properties of the absolute error estimation, which leads to our proposal of using relative error in \Cref{sec:relative.error}.

\subsection{Absolute error estimation via influence functions}\label{sec:absolute.error.estimation}
Provided with an HTE estimator $\hat{\tau}(x)$,
we adopt the following influence function for $\phi(\hat{\tau})$, 
\begin{align}\label{eq:EIF}
\begin{split}    
    \psi(\phi(\hat{\tau}); Z)
    :=&  \left((\mu_1(X) - \mu_0(X)) - \hat{\tau}(X)\right)^2 \\
    &+ 2\left((\mu_1(X) - \mu_0(X)) - \hat{\tau}(X)\right) \cdot \left(\frac{W(Y - \mu_1(X))}{e(X)} - \frac{(1-W)(Y - \mu_0(X))}{1-e(X)} \right) - \phi(\hat{\tau}).
\end{split}
\end{align}
Our derivation of the efficient influence function aligns with the variable importance measurement for heterogeneous treatment effects \parencite{hines2022variable}.
The one-step correction estimator associated with \eqref{eq:EIF} is
\begin{align}\label{eq:estimator.absolute.error}
\begin{split}    
    \hat{\phi}(\hat{\tau})
    :=& \frac{1}{n} \sum_{i=1}^n \hat{\psi}(\phi(\hat{\tau}); Z_i)
    = \frac{1}{n} \sum_{i=1}^n \left((\tilde{\mu}_1(X_i) - \tilde{\mu}_0(X_i)) - \hat{\tau}(X_i)\right)^2  \\
    &+ 2\left((\tilde{\mu}_1(X_i) - \tilde{\mu}_0(X_i)) - \hat{\tau}(X_i)\right) \cdot \left(\frac{W_i(Y_i - \tilde{\mu}_1(X_i))}{\tilde{e}(X_i)} - \frac{(1-W_i)(Y_i - \tilde{\mu}_0(X_i))}{1-\tilde{e}(X_i)} \right).
\end{split}
\end{align}
Here $\tilde{\mu}_0(x)$, $\tilde{\mu}_1(x)$, and $\tilde{e}(x)$ are estimators of the nuisance functions ${\mu}_0(x)$, ${\mu}_1(x)$, and $e(x)$ obtained on the test dataset\footnote{We use cross-fitting \parencite{chernozhukov2018double} to ensure the independence of $\tilde{\mu}_0(x)$, $\tilde{\mu}_1(x)$, and $\tilde{e}(x)$ used by $\hat{\phi}_i(\hat{\tau})$ are independent of $Y_i$, $W_i$, $X_i$ therein when computing $\hat{\phi}(\hat{\tau})$.}.
The variance of the estimated evaluation error, denoted by ${V}(\hat{\phi}(\hat{\tau}))$, can be approximated by the empirical variance of $\hat{\psi}(\phi(\hat{\tau}); Z_i)$, denoted by $\hat{V}(\hat{\phi}(\hat{\tau}))$.
The $1-\alpha$ confidence interval of the absolute error takes the form 
\begin{align}\label{eq:CI.absolute.error}
    [\underline{\hat{\phi}}(\hat{\tau}; 1-\alpha), ~\overline{\hat{\phi}}(\hat{\tau}; 1-\alpha)]:= \left[\hat{\phi}(\hat{\tau}) - q_{1-\alpha/2} \hat{V}^{1/2}(\hat{\phi}(\hat{\tau})), ~\hat{\phi}(\hat{\tau}) + q_{1-\alpha/2} \hat{V}^{1/2}(\hat{\phi}(\hat{\tau}))\right],
\end{align} 
where $q_{1-\alpha/2}$ denotes the $1-\alpha/2$ quantile of a standard normal random variable.
The entire algorithm is summarized in \Cref{algo:absolute.error}.

The theorem below characterize the asymptotic distribution of $\hat{\phi}(\hat{\tau})$, which provides the theoretical guarantee of the validity of the confidence interval~\eqref{eq:CI.absolute.error}.
\begin{theorem}\label{theo:absolute.error}
    Assume the following conditions.
    \begin{itemize}
        \item [(a)] $Y$ is bounded, $\eta < e(X) < 1 - \eta$ for some $\eta > 0$.
        \item [(b)] The nuisance function estimators $\tilde{\mu}_{0}(x)$, $\tilde{\mu}_{1}(x)$, $\tilde{e}(x)$ obtained from the test data\footnote{When cross-fitting is used to compute the absolute error, we require (b) to hold for all $\tilde{\mu}_{0}^{-k}(x)$, $\tilde{\mu}_{1}^{-k}(x)$, $\tilde{e}^{-k}(x)$, $1 \le k \le K$.} satisfy $\EE[(\tilde{\mu}_{1}(X) - \mu_1(X))^2]^{1/2}$, $\EE[(\tilde{\mu}_{0}(X) - \mu_0(X))^2]^{1/2}$, $\EE[(\hat{e}(X) - e(X))^2]^{1/2} = o_p(n^{-1/4})$, $1 \le k \le K$. 
        \item [(c)] The true absolute error $\phi(\hat{\tau}) > 0$.
    \end{itemize}
    Then $\hat{\phi}(\hat{\tau})$, $\hat{V} (\hat{\phi}(\hat{\tau}))$ of \Cref{algo:absolute.error} satisfy
    \begin{align*}
        \frac{\hat{\phi}(\hat{\tau}) - {\phi}(\hat{\tau})}{\sqrt{n\hat{V}(\hat{\phi}(\hat{\tau}))}}
        \stackrel{d}{\to} \calN(0,1).
    \end{align*}
    In addition, the estimator $\hat{\phi}(\hat{\tau})$ is semi-parametrically efficient regarding the nonparametric model.
\end{theorem}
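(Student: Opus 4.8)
\emph{Strategy and decomposition.} I would follow the standard template for one-step (debiased) estimators built from an efficient influence function; the only genuinely case-specific part --- and the main obstacle --- is the algebra of the second-order remainder. Write $\bar\psi(Z;\mu_0,\mu_1,e) := \psi(\phi(\hat\tau);Z) + \phi(\hat\tau)$ for the uncentered influence function, so that $\EE[\bar\psi(Z;\mu_0,\mu_1,e)] = \phi(\hat\tau)$ and $\hat\phi(\hat\tau) = \frac1n\sum_i \bar\psi(Z_i;\tilde\mu_0,\tilde\mu_1,\tilde e)$. Working fold by fold and using that cross-fitting makes the nuisance estimates independent of the points at which they are evaluated, I would decompose
\begin{align*}
\hat\phi(\hat\tau) - \phi(\hat\tau)
= \frac1n\sum_{i=1}^n \psi(\phi(\hat\tau);Z_i)
\;+\; \mathrm{(EP)}
\;+\; \mathrm{(Rem)},
\end{align*}
where, with the out-of-fold nuisance estimates held fixed, $\mathrm{(EP)} := \big(\frac1n\sum_i - \EE\big)\big[\bar\psi(Z_i;\tilde\mu_0,\tilde\mu_1,\tilde e) - \bar\psi(Z_i;\mu_0,\mu_1,e)\big]$ and $\mathrm{(Rem)} := \EE\big[\bar\psi(Z;\tilde\mu_0,\tilde\mu_1,\tilde e)\big] - \phi(\hat\tau)$. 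Conditioning on the out-of-fold data, $\mathrm{(EP)}$ is an average of \iid mean-zero variables whose conditional variance is of order $n^{-1}\EE[(\bar\psi(Z;\tilde\mu_0,\tilde\mu_1,\tilde e) - \bar\psi(Z;\mu_0,\mu_1,e))^2]$; by condition~(a) (boundedness of $Y$, overlap, with $\tilde e$ truncated to $[\eta,1-\eta]$ if necessary) the map $(\mu_0,\mu_1,e)\mapsto\bar\psi$ is Lipschitz in $L^2(\PP_X)$, so this conditional variance is $o_p(1)$ by~(b), and a conditional Chebyshev bound gives $\mathrm{(EP)} = o_p(n^{-1/2})$.

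\emph{The remainder (main obstacle).} Computing $\EE[\bar\psi(Z;\tilde\mu_0,\tilde\mu_1,\tilde e)\mid X]$ from $\EE[W\mid X]=e(X)$, $\EE[WY\mid X]=e(X)\mu_1(X)$ and the control-arm analogues, and abbreviating $a_j := \tilde\mu_j - \mu_j$, $b := \tilde e - e$, $g := \tau - \hat\tau$, I expect the identity
\begin{align*}
\mathrm{(Rem)}
= -\,\EE\big[(a_1 - a_0)^2\big]
\;+\; 2\,\EE\!\left[(g + a_1 - a_0)\,b\left(\frac{a_1}{\tilde e} + \frac{a_0}{1-\tilde e}\right)\right].
\end{align*}
Every term is (at least) quadratic in the nuisance errors, so Cauchy--Schwarz together with overlap and the boundedness of $Y$ (hence of $\mu_0,\mu_1,\tau$) and of $\hat\tau$ gives $|\mathrm{(Rem)}| \lesssim \|a_1\|_2^2 + \|a_0\|_2^2 + \|b\|_2\big(\|a_1\|_2 + \|a_0\|_2\big) = o_p(n^{-1/2})$ exactly under the $o_p(n^{-1/4})$ rates of~(b). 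The work here lies in establishing the algebraic identity and in keeping $\tilde e$ bounded away from $0$ and $1$; note also that, $\phi$ being \emph{quadratic} in $\tau$, $\mathrm{(Rem)}$ retains the irreducible squared-error term $\EE[(a_1-a_0)^2]$, which is precisely the feature of the absolute error that the (linear-in-$\tau$) relative error of \Cref{sec:relative.error} avoids.

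\emph{Limit distribution, variance, and efficiency.} The leading term $\frac1n\sum_i\psi(\phi(\hat\tau);Z_i)$ is an \iid average of mean-zero variables with finite variance $V := \var(\psi(\phi(\hat\tau);Z))$ (finiteness from~(a)), so the CLT gives $\sqrt n\,(\hat\phi(\hat\tau) - \phi(\hat\tau)) \stackrel{d}{\to}\calN(0,V)$; writing $\xi := \frac{W(Y-\mu_1(X))}{e(X)} - \frac{(1-W)(Y-\mu_0(X))}{1-e(X)}$ for the residual term in \eqref{eq:EIF} and using $\EE[\xi\mid X]=0$ (so $g^2$ and $g\xi$ are uncorrelated) one gets $V = \var(g^2) + 4\,\EE[g^2\xi^2]$, which is strictly positive under condition~(c), so the studentized statistic is non-degenerate. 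The empirical variance $\hat V(\hat\phi(\hat\tau))$ consistently estimates the variance appearing in the theorem's normalization: split it into the empirical variance evaluated at the true nuisances, which the law of large numbers sends to $V$, plus a remainder controlled exactly as $\mathrm{(EP)}$ was; Slutsky's theorem then delivers the studentized convergence to $\calN(0,1)$. Finally, the decomposition exhibits $\hat\phi(\hat\tau)$ as asymptotically linear with influence function $\psi(\phi(\hat\tau);\cdot)$, so semiparametric efficiency reduces to verifying that this is the efficient influence function of $\PP\mapsto \EE_\PP[(\tau_\PP(X) - \hat\tau(X))^2]$: differentiate this functional along a generic smooth one-dimensional submodel $\{\PP_t\}$ with score $s(Z)$ and check $\frac{d}{dt}\big|_{t=0}\EE_{\PP_t}[(\tau_{\PP_t}(X)-\hat\tau(X))^2] = \EE[\psi(\phi(\hat\tau);Z)\,s(Z)]$; since the nonparametric tangent space is all of $L^2_0(\PP)$, the influence function is unique and hence efficient (recovering the variable-importance efficient influence function of \parencite{hines2022variable}), so $\hat\phi(\hat\tau)$ attains the nonparametric efficiency bound.
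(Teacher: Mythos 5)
Your proposal is correct and follows essentially the same route as the paper: the same leading-term/empirical-process/remainder decomposition, conditional Chebyshev for the empirical-process term, a Cauchy--Schwarz bound on the second-order remainder under the $o_p(n^{-1/4})$ rates, Slutsky for studentization, and efficiency from uniqueness of the influence function in the nonparametric model. Your closed-form remainder identity $\mathrm{(Rem)}=-\EE[(a_1-a_0)^2]+2\,\EE[(g+a_1-a_0)\,b(a_1/\tilde e+a_0/(1-\tilde e))]$ checks out and simply aggregates what the paper splits into the sub-remainders $R\text{(a)}$, $R\text{(b)}$, $R\text{(c)}$.
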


The proof is provided in the appendix.
We remark that our influence function and the estimator is different from that of \cite[Theorem 2]{alaa2019validating}.
Since our proposal $\hat{\phi}(\hat{\tau})$ is semi-parametrically efficient, the confidence interval is proved to be no wider than that in \cite{alaa2019validating} (see \Cref{sec:simulation} for numerical evidence).
According to \Cref{theo:absolute.error}, the convergence of $\hat{\phi}(\hat{\tau})$ at the parametric rate of $ n^{-1/2} $ only requires all nuisance function estimators to converge at a rate no slower than $ n^{-1/4}$, that is the locally doubly robust property \parencite{chernozhukov2018double}.

Based on the $1-\alpha$ confidence interval of the absolute errors for two HTE estimators $\hat{\tau}_1$, $\hat{\tau}_2$, if the absolute error interval of $\hat{\tau}_1$ lies entirely to the right of that of $\hat{\tau}_2$, we can conclude with at least $1-2\alpha$ confidence that the estimation error of $\hat{\tau}_1$ is greater than that of $\hat{\tau}_2$, and therefore $\hat{\tau}_2$ should be selected. 
If the two intervals overlap, we are unable to confidently decide which estimator is more accurate.

\subsection{Issues of absolute error}\label{sec:absolute.error.issue}

Despite the results in \Cref{theo:absolute.error}, we highlight several undesirable aspects of the absolute error approach. In \Cref{sec:relative.error} below, we demonstrate how relative error estimation can address these issues.
\begin{itemize}
    \item [(i)] Sensitivity to errors of nuisance function estimators. 
    In \Cref{fig:inaccurate nuisance function estimator}, errors in $\tilde{\mu}_1(x)$, $\tilde{\mu}_0(x)$, $\tilde{e}(x)$ can introduce significant bias into the absolute error estimator, even resulting in negative estimates conflicting with the fact that error should always be non-negative. 
    
    \item [(ii)] Correlation across the estimated absolute errors of different HTE estimators.
    The estimated absolute error of different HTE estimators are correlated because they are based on the same validation data and share the complex nuisance function estimators.
    \Cref{theo:absolute.error} guides the construction of confidence intervals for each estimator separately, but does not directly address the correlation between the estimated absolute errors.

    \item [(iii)] Degenerate null. Condition (c) of \Cref{theo:absolute.error} indicates the asymptotic distribution may be invalid for the degenerate case $\PP_X(\hat{\tau}(X) = \tau(X)) = 1$. 
    One solution is analyzing the asymptotic distribution of the higher-order pathwise derivatives of $\phi(\hat{\tau})$ to derive the asymptotic distribution of \eqref{eq:estimator.absolute.error} when $\tau(x) = \hat{\tau}(x)$, which remains generally an open problem \parencite{hines2022variable, hudson2023nonparametric}.

\end{itemize}

\begin{figure}[h]
        \centering
        \begin{minipage}{0.3\textwidth}
                \centering
                \includegraphics[clip, trim = 0cm 0cm 0cm 0cm, width = \textwidth]{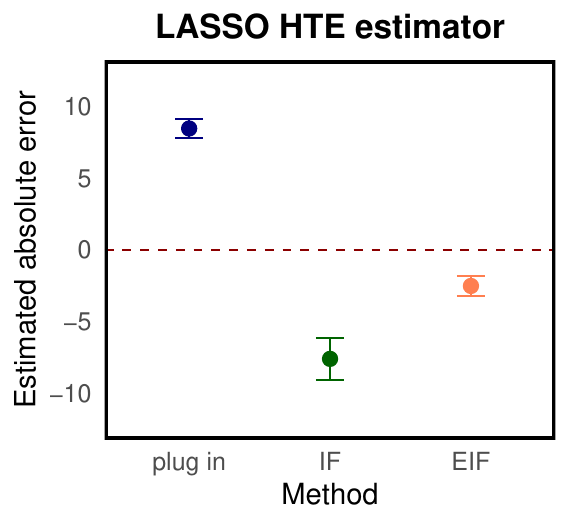}
        \end{minipage}
        \begin{minipage}{0.3\textwidth}
                \centering
                \includegraphics[clip, trim = 0cm 0cm 0cm 0cm, width = \textwidth]{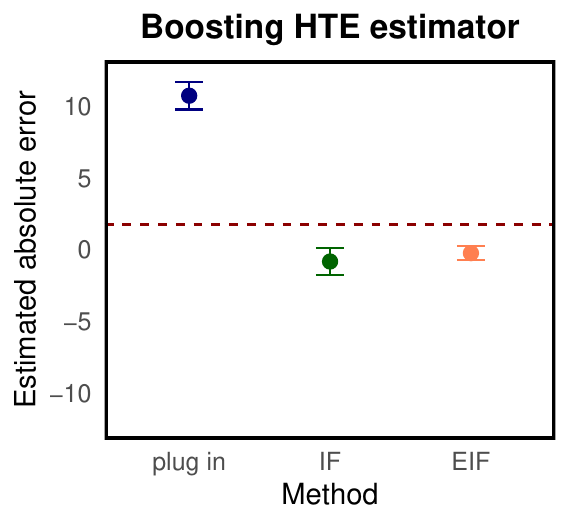}
        \end{minipage}
        \hspace{1cm}
            \begin{minipage}{0.3\textwidth}
                \centering
                \includegraphics[clip, trim = 0cm 0cm 0cm 0cm, width = \textwidth]{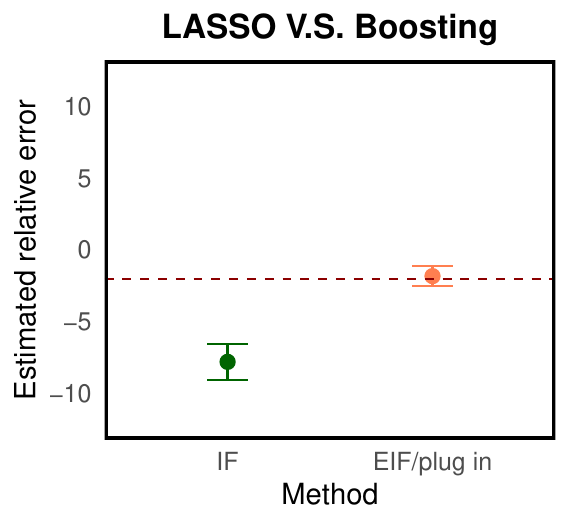}
        \end{minipage}
        \text{\hspace{3cm} (a) Absolute error \hspace{5cm} (b) Relative error}
        \caption{Comparison of estimated absolute and relative errors with inaccurate nuisance function estimators.
        We compare two HTE estimators: one uses LASSO for nuisance function estimation, and the other uses gradient boosting. 
        For both HTE estimators, we implement three approaches to evaluate the absolute and the relative errors: (1) the plug-in method, (2) the estimator by \cite{alaa2019validating} (IF), and (3) our proposal (EIF). 
        (Our proposal relative error estimator agrees with that based on the plug-in absolute error estimator, and thus we merge the two methods in the third panel.)
        In all methods, cross-fitting is used and the nuisance function estimators are shared. 
        All evaluation methods use the same nuisance function estimators derived from the test dataset, which were intentionally designed to underfit the data and thus inaccurate.
        For each evaluation, we plot the estimated error as well as the $90\%$ confidence interval.
        The red dashed line in the figure represents the true absolute/relative error.
        The IF and EIF estimators for absolute error are negative, which conflicts with the non-negative nature of prediction errors. 
        Only the confidence interval of the EIF relative error correctly captures the true value.
        }
    \label{fig:inaccurate nuisance function estimator}
\end{figure}

\section{Relative error estimation and inference}\label{sec:relative.error}

In this section, we focus on the estimation and inference of relative error based on influence functions. We then compare the relative error estimation with the absolute counterpart regarding the issues in \Cref{sec:absolute.error.issue}.

\subsection{Relative error estimation via influence functions}\label{sec:relative.error.estimation}

Provided with two HTE estimators $\hat{\tau}_1(x)$, $\hat{\tau}_2(x)$, we adopt the following influence function for the relative error $\delta(\hat{\tau}_1, \hat{\tau}_2)$,
\begin{align}\label{eq:EIF.relative} 
\begin{split}    
    &\psi(\delta(\hat{\tau}_1, \hat{\tau}_2); Z)
    := \hat{\tau}_1^2(X) - \hat{\tau}_2^2(X) \\
    &- 2\left(\hat{\tau}_1(X) - \hat{\tau}_2(X)\right) \cdot \left(\frac{W(Y - \mu_1(X))}{e(X)} + \mu_1(X)- \frac{(1-W)(Y - \mu_0(X))}{1-e(X)} - \mu_0(X)\right) - \delta(\hat{\tau}_1, \hat{\tau}_2).
\end{split}
\end{align}
The implied one-step correction estimator is
\begin{align}\label{eq:estimator.relative.error}
\begin{split}    
    \hat{\delta}(\hat{\tau}_1, \hat{\tau}_2)
    :=&\frac{1}{n} \sum_{i=1}^n  \hat{\psi}(\delta(\hat{\tau}_1, \hat{\tau}_2); Z_i)
    = \frac{1}{n} \sum_{i=1}^n \hat{\tau}_1^2(X_i) - \hat{\tau}_2^2(X_i) \\
    &- 2\left(\hat{\tau}_1(X_i) - \hat{\tau}_2(X_i)\right) \cdot \left(\frac{W_i(Y_i - \tilde{\mu}_1(X_i))}{\tilde{e}(X_i)} + \tilde{\mu}_1(X_i)- \frac{(1-W_i)(Y_i - \tilde{\mu}_0(X_i))}{1-\tilde{e}(X_i)} - \tilde{\mu}_0(X_i)\right).
\end{split}
\end{align}
The algorithm can be summarized as \Cref{algo:absolute.error} combined with \eqref{eq:estimator.relative.error}.

Based on the $1-\alpha$ confidence interval of the relative error, if the interval lies entirely to the right of zero, we can conclude with at least $1-\alpha$ confidence that the estimation error of $\hat{\tau}_1$ is greater than that of $\hat{\tau}_2$, and therefore $\hat{\tau}_2$ should be selected. 
If the interval lies entirely to the left of zero, we can conclude with the same confidence that the estimation error of $\hat{\tau}_2$ is greater than that of $\hat{\tau}_1$, and $\hat{\tau}_1$ should be chosen. 
If the interval contains zero, we are unable to confidently decide which estimator is superior.

\begin{algorithm}\caption{Absolute (relative) error}\label{algo:absolute.error}
    \begin{algorithmic}[1]
    \STATE \textbf{Input}: An HTE estimator $\hat{\tau}(x)$, test data $Z_i = (X_i, W_i, Y_i)$, $1 \le i \le n$, methods of estimating nuisance functions $\mu_0(x)$, $\mu_1(x)$, $e(x)$, number of folds $K$ for cross-fitting, confidence level $1-\alpha$.
    \STATE Randomly split the test dataset into $K$ folds of approximately equal size. Denote the $k$-th fold by $D_k$.

    \FOR{$k = 1,\ldots,K$}
    
    \STATE Apply the nuisance function estimators to test folds $\cup_{j \neq k} D_j$ and obtain $\tilde{\mu}_{0}^{-k}(x)$, $\tilde{\mu}_{1}^{-k}(x)$, and $\tilde{e}^{-k}(x)$.

    \ENDFOR
    
    \STATE Compute the one-step correction estimator based on \Cref{eq:estimator.absolute.error} (\Cref{eq:estimator.relative.error}). 
    For $i \in D_k$,
    \begin{align*} 
        \hat{\psi}_i^+(\phi(\hat{\tau}); Z_i) :=& \left((\tilde{\mu}_1^{-k}(X_i) - \tilde{\mu}_0^{-k}(X_i)) - \hat{\tau}(X_i)\right)^2 \\
        &+ 2\left((\tilde{\mu}_1^{-k}(X_i) - \tilde{\mu}_0^{-k}(X_i)) - \hat{\tau}(X_i)\right) \cdot \left(\frac{W_i(Y_i - \tilde{\mu}_1^{-k}(X_i))}{\tilde{e}^{-k}(X_i)} - \frac{(1-W_i)(Y_i - \tilde{\mu}_0^{-k}(X_i))}{1-\tilde{e}^{-k}(X_i)} \right),
    \end{align*}
     and take the average
     \begin{align*}
        \hat{\phi}(\hat{\tau}) :=& \frac{1}{n} \sum_{k=1}^K \sum_{i \in D_k} \hat{\psi}_i^+(\phi(\hat{\tau}); Z_i).
    \end{align*}
    Here $\hat{\psi}_i^+(\phi(\hat{\tau}); Z_i) $ and $\hat{\psi}_i(\phi(\hat{\tau}); Z_i)$ differ in a constant.
    
    \STATE Compute the estimator of the variance of $\hat{\phi}(\hat{\tau})$,
    \begin{align*}
       \hat{V}(\hat{\phi}(\hat{\tau})) :=& \frac{1}{n} \sum_{i = 1}^n \left(\hat{\psi}_i^+(\phi(\hat{\tau}); Z_i)  -  \hat{\phi}(\hat{\tau}) \right)^2.
    \end{align*}
    
    \STATE \textbf{Output}: Estimated error $\hat{\phi}(\hat{\tau})$, the estimator of its variance $\hat{V}(\hat{\phi}(\hat{\tau}))$, $1-\alpha$ confidence interval $\left[\hat{\phi}(\hat{\tau}) - q_{1-\alpha/2} \hat{V}^{1/2}(\hat{\phi}(\hat{\tau})), ~\hat{\phi}(\hat{\tau}) + q_{1-\alpha/2} \hat{V}^{1/2}(\hat{\phi}(\hat{\tau}))\right]$.

    \end{algorithmic}
\end{algorithm}

Similar to \Cref{theo:absolute.error}, we characterize the asymptotic distribution of $\hat{\delta}(\hat{\tau}_1, \hat{\tau}_2)$.
The proof is provided in the appendix.
\begin{theorem}\label{theo:relative.error}
    Assume the following conditions.
    \begin{itemize}
        \item [(a)] $Y$ is bounded, $\eta< e(X) < 1 - \eta$ for some $\eta > 0$.
        \item [(b)] The nuisance function estimators obtained from the test data satisfy $\|\tilde{\mu}_{1}(X) - \mu_1(X)\|_2$, $\|\tilde{\mu}_{0}(X) - \mu_0(X)\|_2$, $\|\hat{e}(X) - e(X)\|_2 = o_p(1)$, and $\EE[|(\tilde{\mu}_{1}(X) - \mu_1(X))(\hat{e}(X) - e(X))|]$, $\EE[|(\tilde{\mu}_{0}(X) - \mu_0(X))(\hat{e}(X) - e(X))|] = o_p(n^{-1/2})$.
        \item [(c)] The true relative error $\EE[(\hat{\tau}_1(X) - \hat{\tau}_2(X))^2] \neq 0$.
    \end{itemize}
    Then $\hat{\delta}(\hat{\tau}_1, \hat{\tau}_2)$, $\hat{V}(\hat{\delta}(\hat{\tau}_1, \hat{\tau}_2))$ of \Cref{algo:absolute.error} satisfy
    \begin{align*}
        \frac{ \hat{\delta}(\hat{\tau}_1, \hat{\tau}_2) -     {\delta}(\hat{\tau}_1, \hat{\tau}_2)}{\sqrt{n\hat{V}(    \hat{\delta}(\hat{\tau}_1, \hat{\tau}_2))}}
        \stackrel{d}{\to} \calN(0,1).
    \end{align*}
    The estimator $\hat{\delta}(\hat{\tau}_1, \hat{\tau}_2)$ is efficient regarding the nonparametric model.

    Further assume
      \begin{itemize}
        \item [(d)] $\hat{e}(x) = e(x)$ or $\tilde{\mu}_{1}(x) = \mu_1(x)$, $\tilde{\mu}_{0}(x) = \mu_0(x)$.
    \end{itemize}
    Then the estimator $\hat{\delta}(\hat{\tau}_1, \hat{\tau}_2)$ is unbiased, i.e., $\EE\left[\hat{\delta}(\hat{\tau}_1, \hat{\tau}_2)\right]
        = {\delta}(\hat{\tau}_1, \hat{\tau}_2)$.
\end{theorem}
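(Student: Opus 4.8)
The plan is to run the now-standard cross-fitted one-step analysis, whose only problem-specific ingredient is the product (``doubly robust'') structure of the first-order bias of $\hat\delta(\hat\tau_1,\hat\tau_2)$; efficiency and the finite-sample unbiasedness under (d) then fall out of the same bias computation. Let $\psi_0$ denote the uncentered summand in \eqref{eq:estimator.relative.error} evaluated at the true $(\mu_0,\mu_1,e)$, and $\hat\psi^+_k$ the version used on fold $D_k$. I would write
\[
\hat\delta-\delta = (\PP_n-\PP)\psi_0 + \frac1n\sum_k\sum_{i\in D_k}\Big\{\big(\hat\psi^+_k(Z_i)-\psi_0(Z_i)\big)-\PP\big(\hat\psi^+_k-\psi_0\big)\Big\} + R_n ,
\]
with $R_n=\sum_k \tfrac{|D_k|}{n}\,\PP(\hat\psi^+_k-\psi_0)$ the drift term. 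The term $\hat\tau_1^2-\hat\tau_2^2$ is common to $\hat\psi^+_k$ and $\psi_0$ and cancels, so $\hat\psi^+_k-\psi_0$ is $2(\hat\tau_1-\hat\tau_2)$ times a Lipschitz function of the nuisance errors; with $Y$ and $\hat\tau_1,\hat\tau_2$ bounded and $\tilde e^{-k}$ bounded away from $0,1$ (condition (a)), $\EE[(\hat\psi^+_k-\psi_0)^2]\lesssim\|\tilde\mu_1^{-k}-\mu_1\|_2^2+\|\tilde\mu_0^{-k}-\mu_0\|_2^2+\|\tilde e^{-k}-e\|_2^2=o_p(1)$ by (b). By cross-fitting the middle bracket is, conditionally on the out-of-fold data, a centered i.i.d.\ average of variance $o_p(1)/n$, hence $o_p(n^{-1/2})$.

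The heart of the argument is the drift term. Integrating out $(W,Y)$ given $X$ in $\hat\psi^+_k-\psi_0$ collapses the AIPW pseudo-outcome to its conditional mean, and a short simplification yields
\[
\PP(\hat\psi^+_k-\psi_0)= -2\,\EE\!\left[(\hat\tau_1(X)-\hat\tau_2(X))\left(\tfrac{e(X)-\tilde e^{-k}(X)}{\tilde e^{-k}(X)}\big(\mu_1(X)-\tilde\mu_1^{-k}(X)\big)+\tfrac{e(X)-\tilde e^{-k}(X)}{1-\tilde e^{-k}(X)}\big(\mu_0(X)-\tilde\mu_0^{-k}(X)\big)\right)\right].
\]
Thus $|\PP(\hat\psi^+_k-\psi_0)|$ is at most a constant (depending on $\|\hat\tau_1-\hat\tau_2\|_\infty$ and $\eta$) times $\EE[|(\tilde\mu_1^{-k}-\mu_1)(\tilde e^{-k}-e)|]+\EE[|(\tilde\mu_0^{-k}-\mu_0)(\tilde e^{-k}-e)|]=o_p(n^{-1/2})$ by (b), so $R_n=o_p(n^{-1/2})$. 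Combining with the CLT for $(\PP_n-\PP)\psi_0$ --- whose limit variance $\sigma^2:=\var(\psi_0)$ is finite by boundedness and strictly positive by (c), which rules out the degenerate case $\hat\tau_1=\hat\tau_2$ a.s. --- gives $\sqrt n(\hat\delta-\delta)\todistr\calN(0,\sigma^2)$. Consistency of the variance estimator in \Cref{algo:absolute.error} for $\sigma^2$ follows from the same boundedness/consistency bounds, and Slutsky delivers the studentized statement.

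For efficiency I would use that the model is nonparametric, so the tangent space is all of $L_2^0(\PP)$ and $\delta(\cdot)$ has a unique gradient, which is then efficient; it suffices to verify that the centered \eqref{eq:EIF.relative} is a gradient, i.e.\ equals $\tfrac{d}{dt}\delta(\PP_t)\big|_{t=0}$ paired against the score along every regular parametric submodel. I would check this by additivity over $\delta=\EE[\hat\tau_1^2-\hat\tau_2^2]-2\EE[(\hat\tau_1-\hat\tau_2)\mu_1]+2\EE[(\hat\tau_1-\hat\tau_2)\mu_0]$: the first piece has the obvious EIF $\hat\tau_1^2-\hat\tau_2^2-\EE[\hat\tau_1^2-\hat\tau_2^2]$ (the $\hat\tau_j$ being fixed), and each of the other two is a weighted partial mean $\EE[g(X)\mu_w(X)]$ with \emph{known} weight $g=\hat\tau_1-\hat\tau_2$, whose EIF is the familiar $g(X)\big(\indc{W=w}(Y-\mu_w(X))/\Prob(W=w\mid X)+\mu_w(X)\big)-\EE[g(X)\mu_w(X)]$ --- the calculation underlying the variable-importance EIF of \cite{hines2022variable}; summing reproduces \eqref{eq:EIF.relative}. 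For unbiasedness under (d): with cross-fitting the fold-$k$ nuisances are independent of $\{Z_i:i\in D_k\}$ and of $\hat\tau_1,\hat\tau_2$, so $\EE[\hat\delta]-\delta$ is the average of the displayed bias expressions, and each vanishes \emph{identically} under (d) (either $e-\tilde e^{-k}\equiv0$, or $\mu_1-\tilde\mu_1^{-k}\equiv\mu_0-\tilde\mu_0^{-k}\equiv0$), giving $\EE[\hat\delta]=\delta$ exactly.

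The one step carrying all the weight is the conditional-expectation computation exhibiting the bias as a sum of \emph{products} of nuisance errors; getting its exact algebraic form right is precisely what justifies stating (b) in the weak product-integrability form (rather than demanding $o_p(n^{-1/4})$ rates for each nuisance) and simultaneously yields the exact double robustness claimed in (d). A secondary, more cosmetic point is making strict positivity of $\sigma^2$ under (c) fully rigorous: one may need the mild extra observation that overlap together with a non-degenerate error term $\epsilon$ bounds the conditional variance of the AIPW term below on the event $\{\hat\tau_1\neq\hat\tau_2\}$.
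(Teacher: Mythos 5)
Your proposal is correct and follows essentially the same route as the paper: both reduce the problem to recognizing that $\delta(\hat\tau_1,\hat\tau_2)$ is a weighted average treatment effect with known weight $\hat\tau_1-\hat\tau_2$, so the standard cross-fitted one-step analysis with a product-form drift term applies, and the unbiasedness under (d) and the tightened condition (b) both read off from that same bias expression. The only difference is that the paper defers the empirical-process/remainder bookkeeping to cited standard results, whereas you carry out the conditional-expectation computation explicitly (and correctly), and you rightly flag the positivity of $\sigma^2$ under (c) as the one point both arguments leave slightly informal.
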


\subsection{Advantages of relative error}\label{sec:relative.error.advantage}

We compare \Cref{theo:relative.error} and \Cref{theo:absolute.error}, and discuss how the estimation and inference of relative errors avoids the issues of absolute errors in \Cref{sec:absolute.error.issue}.

\begin{itemize}
    \item [(i)] Condition (b) of \Cref{theo:relative.error} only requires the nuisance function estimators to be consistent and the product of the error of $\hat{e}(x)$ and $\tilde{\mu}_{1}(x)$, $\tilde{\mu}_{0}(x)$ to converge at $n^{-1/2}$.
    This condition of relative error is strictly weaker than the condition (b) of the absolute error in \cref{theo:absolute.error}.
    Particularly, when the treatment assignment is known, such as in randomized trials, then the validity of \Cref{theo:relative.error} only requires the consistency of $\tilde{\mu}_{1}(x)$, $\tilde{\mu}_{0}(x)$.

    In addition, condition (d) implies that the relative error estimator satisfies a global doubly robust property, meaning that if $\hat{e}(x) = e(x)$, then $\hat{\mu}_1(x)$ and $\hat{\mu}_0(x)$ can be arbitrary, even inconsistent, and the estimated relative error will still be unbiased. Similarly, the relative error remains unbiased if $\hat{\mu}_1(x)$ and $\hat{\mu}_0(x)$ are correct, regardless of the estimator $\hat{e}(x)$ used.
    This property does not hold for the absolute error estimator.

    \item [(ii)] Relative error estimators directly estimates and performs inference for the relative difference between two models' performance, rather than dealing with each model's error separately and then comparing them. 
    Therefore, unlike the absolute error estimation approach, there is no need to handle the dependency between two absolute error confidence intervals.
    
    \item [(iii)] Degenerate null. 
    Condition (c) of \Cref{theo:absolute.error} $\hat{\tau}(x) \neq \tau(x)$ is hard to validate or falsify, since $\tau(x)$ is not directly observed and needs to be estimated.
    In contrast, condition (c) of \Cref{theo:relative.error} $\hat{\tau}_1(x) \neq \hat{\tau}_2(x)$ can be verified straightfowardly, since $\hat{\tau}_1(x)$ and $\hat{\tau}_2(x)$ are provided functions and there is no extra estimation required.

    \item [(iv)] The relative error is more effective in identifying the better HTE estimators compared to the absolute error when $\hat{\tau}_1$, $\hat{\tau}_2$ are similar (see \Cref{fig:similar HTE estimator} for a numerical example).
    When $\hat{\tau}_1(X)$ and $\hat{\tau}_2(X)$ are similar, the confidence intervals for the absolute errors of $\hat{\tau}_1(X)$ and $\hat{\tau}_2(X)$ can still be wide, and the absolute error confidence intervals are too wide to be informative of which estimator is better.
    In contrast, the width of the confidence interval for $\hat{\delta}(\hat{\tau}_1, \hat{\tau}_2)$ is proportional to $\sqrt{\EE[(\hat{\tau}_1(X) - \hat{\tau}_2(X))^2]}$, which we prove in the appendix, and the confidence interval of the relative error can still be sufficiently narrow to reliably identify the more accurate estimator.   
\end{itemize}

\begin{figure}[h]
        \centering
        \begin{minipage}{0.5\textwidth}
                \centering
                \includegraphics[clip, trim = 0cm 0cm 0cm 0cm, width = \textwidth]{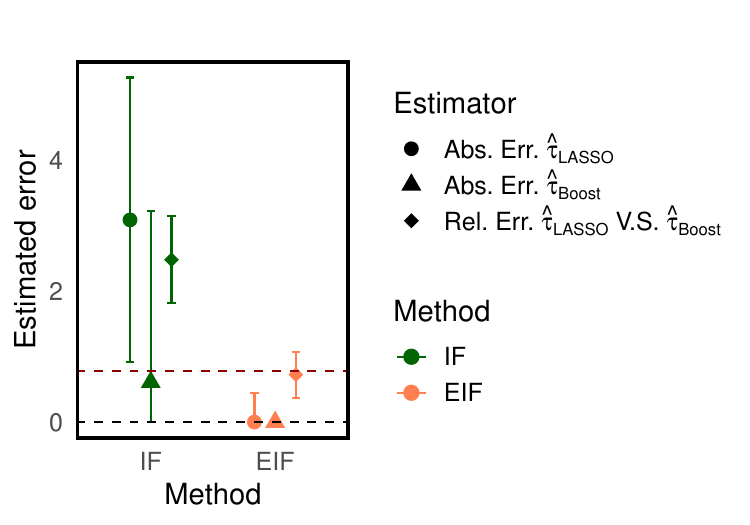}
        \end{minipage}
        \caption{
        Relative error evaluation estimators are more effective in determining the better one of two similar HTE estimators.
        We consider two similar HTE estimators using LASSO for nuisance function estimation which only differ in the regularization hyperparameter. 
        We consider the IF \parencite{alaa2019validating}, EIF (our proposal) absolute, relative error confidence intervals.
        The $90\%$ absolute error confidence intervals of both IF and EIF are too wide to distinguish the two HTE estimators, while the confidence interval of the relative error estimator is significantly narrower and find the better estimator $\hat{\tau}_2$ (the true relative error is indicated in dark red and $\hat{\tau}_2$ has a smaller error).
        We remark that even if IF's relative error confidence interval stays above zero, it does not contain the true relative error.
        }
    \label{fig:similar HTE estimator}
\end{figure}

\subsection{Beyond difference in conditional means}\label{rmk:DINA}

For responses generated from an exponential family or the Cox model, such as when $ Y \mid X $ follows a Poisson distribution, the difference in the natural parameter functions between the treatment and the control has been proposed as the treatment effect estimand \parencite{gao2022DINA}.
The difference in natural parameter functions is better suited for modeling covariate dependence, as transforming to the natural parameter scale eliminates support constraints.
Let $\eta(\cdot)$ be the link function transforming the conditional mean function to the natural parameter scale, then the DINA estimand is 
\begin{align*}
    \tau(x) := \eta(E[Y(1) \mid X = x]) - \eta(E[Y(0) \mid X = x]).
\end{align*} 
Similarly we can define the absolute error~\eqref{defi:evaluation.error}, the relative error~\eqref{defi:evaluation.error.relative}, 
\begin{align*}
    &\hat{\phi}(\hat{\tau})
    := \frac{1}{n} \sum_{i=1}^n \left((\tilde{\eta}_1(X_i) - \tilde{\eta}_0(X_i)) - \hat{\tau}(X_i)\right)^2  \\
    &+ 2\left((\tilde{\eta}_1(X_i) - \tilde{\eta}_0(X_i)) - \hat{\tau}(X_i)\right) \cdot \left(\frac{W_i\tilde{\eta}'_1(X_i)(Y_i - \tilde{\mu}_1(X_i))}{\tilde{e}(X_i)} - \frac{(1-W_i)\tilde{\eta}'_0(X_i)(Y_i - \tilde{\mu}_0(X_i))}{1-\tilde{e}(X_i)} \right),
\end{align*}
and derive the efficient one-step correction estimators, \begin{align*}    
    &\hat{\phi}(\delta(\hat{\tau}_1, \hat{\tau}_2))
    := \frac{1}{n} \sum_{i=1}^n \hat{\tau}_1^2(X_i) - \hat{\tau}_2^2(X_i) - 2\left(\hat{\tau}_1(X_i) - \hat{\tau}_2(X_i)\right)\\
    & \cdot \left(\frac{W_i \tilde{\eta}'_1(X_i) (Y_i - \tilde{\mu}_1(X_i))}{\tilde{e}(X_i)} + \tilde{\eta}_1(X_i)- \frac{(1-W_i)\tilde{\eta}'_0(X_i)(Y_i - \tilde{\mu}_0(X_i))}{1-\tilde{e}(X_i)} - \tilde{\eta}_0(X_i)\right).
\end{align*}
Here $\tilde{\mu}_1(x)$, $\tilde{\mu}_0(x)$ are estimated nuisance mean functions, and $\tilde{\eta}_1(x)$, $\tilde{\eta}_0(x)$ are estimated natural parameter functions.
The estimators recover \eqref{eq:estimator.absolute.error} and \eqref{eq:estimator.relative.error} when the link function $\eta(\cdot)$ is identity. 
In the appendix, we present analogous results to \Cref{theo:absolute.error} and \Cref{theo:relative.error}.
The relative error of DINA maintains the advantages in \Cref{sec:relative.error.advantage}.

\section{Hypothetical real data analysis}\label{sec:simulation}

The ACIC 2016 competition dataset \parencite{dorie2019automated}, derived from the real-world Collaborative Perinatal Project \parencite{niswander1972women}, is used as a benchmark dataset for HTE estimation and related tasks. 
The dataset includes $55$ real variables of various types with natural associations. 
Treatment assignments and potential outcomes are generated using a variety of models. 
Therefore, the true $\mu_0(x)$, $\mu_1(x)$ and thus true $\tau(x)$ are known.
We select four representative scenarios varying two key factors: the model of the propensity score and the model of the group conditional functions. 
Specifically, the scenarios include:
\begin{itemize}
    \item [(a)] Linear $\mu_0(x)$, $\mu_1(x)$, linear $e(x)$.
    \item [(b)] Nonlinear $\mu_0(x)$, $\mu_1(x)$, linear $e(x)$.
    \item [(c)] Linear $\mu_0(x)$, $\mu_1(x)$, nonlinear $e(x)$. 
    \item [(d)] Nonlinear $\mu_0(x)$, $\mu_1(x)$, nonlinear $e(x)$.
\end{itemize} 
Here, linear functions refer to a linear combination of covariates $x$, while nonlinear functions incorporate quadratic or higher-order terms of the covariates. 
For both linear and non-linear functions, the nuisance functions only depend on less than $20\%$ of the covariates.

The total sample size is $4802$. We use a randomly sampled $2802$ data points to obtain the HTE estimators for comparison, and the rest $2000$ data points to evaluate and compare the estimators' performance.

In terms of the HTE estimators for comparison, we consider two T-learners \parencite{Kunzel4156}.
The estimators first estimate $\mu_0(x)$, $\mu_1(x)$ separately and then take the difference $\hat{\mu}_1(x) - \hat{\mu}_0(x)$.
The key difference between two HTE estimators lies in the machine learning algorithms used to obtain $\hat{\mu}_0(x)$, $\hat{\mu}_1(x)$. 
\begin{itemize}
    \item [(i)] LASSO. This method uses LASSO\footnote{We use the R package \textsc{glmnet} for implementation. In particular, we use \textsc{cv.glmnet} to choose the regularization parameter.} to estimate $\mu_0(x)$ and $\mu_1(x)$.
    \item [(ii)] Boosting. This method uses gradient boosting\footnote{We use \textsc{xgboost} for implementation.} to estimate $\mu_0(x)$ and $\mu_1(x)$.
\end{itemize}
For settings (a) and (c), $\mu_0(x)$, $\mu_1(x)$ are linear and LASSO is well-specified. Therefore, the LASSO-based T-learner is expected to be more accurate. 
In contrast, for settings (b) and (d), $\mu_0(x)$, $\mu_1(x)$ are nonlinear, and the boosting-based T-learner should perform more favorably.

We implement five assessment methods: three based on absolute error estimation and two based on relative error estimation.
For absolute error estimators, we consider
\begin{itemize}
    \item One-step correction estimator based on our efficient influence function (EIF), as defined in \eqref{eq:estimator.absolute.error}.
    \item One-step correction estimator by \cite{alaa2019validating} (IF).
    \item Plug-in estimator (plug in): $\sum_{i=1}^n (\hat{\tau}(X_i) - \tilde{\tau}(X_i))^2 / n$, where $\tilde{\tau}(x)$ is the AIPW estimator of the HTE $\tau(x)$ obtained from the test dataset.
\end{itemize}
For relative error estimators, we consider
\begin{itemize}
    \item One-step correction estimator based on our efficient influence function (EIF) in \eqref{eq:estimator.relative.error}. 
    We note that the relative error estimator implied by the plug-in estimator of the absolute error is equivalent to EIF relative error estimator.
    Therefore, we merge the two methods in display.
    
    \item One-step correction estimator of the relative error based on the influence function in \cite{alaa2019validating}.
\end{itemize}
We use 2-fold cross-fitting.
All five methods share the same nuisance function estimators for fair comparison, which we specify case by case below.

To compute the coverage and average width of the confidence intervals ($90\%$ confidence level), and the probability of selecting the correct model (selection accuracy), for scenarios (a) through (d), we generate $100$ different realizations of $\mu_0(x)$, $\mu_1(x)$, and $e(x)$, adhering to the linear/nonlinear constraints aforementioned. For each realization of $\mu_0(x)$, $\mu_1(x)$, and $e(x)$, we further simulate $Y$ and $Z$ independently $100$ times to compute the frequency that the confidence intervals cover the true errors.
Similarly for other metrics above.

\subsection{Confidence interval coverage}\label{sec:coverage}

In \Cref{fig:simulation.coverage}, we examine the simplest scenario (a), where both the propensity score function $e(x)$ and the outcome functions $\mu_0(x)$ and $\mu_1(x)$ are linear. We compare the five evaluation methods equipped with three nuisance functions:
\begin{itemize}
    \item [(i)] The true nuisance functions.
    \item [(ii)] A well-specified linear model.
    \item [(iii)] An erroneous gradient boosting learner.
\end{itemize}

For relative errors, the EIF confidence interval is consistently valid, robust to the nuisance function estimator provided.
In contrast, the coverage of the relative error IF confidence interval fails to achieve the correct coverage.
For absolute errors, the coverage depends heavily on the accuracy of the nuisance learners. 
When the true nuisance estimators are used, the coverage of the EIF method for both the LASSO HTE estimator and the Boosting HTE estimator is close to the target level. However, as we shift to the nuisance estimator obtained using linear regression (which is well-specified but not as accurate), the coverage decreases, and it almost never covers the true value when the inaccurate gradient boosting with underfitting is used. 
The absolute error estimator IF consistently undercovers, even with true nuisance functions.
Additionally, as shown in the appendix, 
the width of the confidence intervals based on IF, for both relative and absolute error estimators, is significantly larger than those based on EIF, indicating less precision.

\subsection{Probability of selecting the winner}\label{sec:selection.accuracy}

We present the probability of selecting the correct model in \Cref{fig:ACIC.selection.accuracy}. The analysis covers settings (a) through (d), where in (a) and (c), the LASSO HTE estimator outperforms the Boosting HTE estimator by a moderate margin, whereas in (b) and (d), the Boosting HTE estimator significantly outperforms the LASSO HTE estimator.
For absolute error confidence intervals, we select the better estimator if the two confidence intervals at the $\alpha/2$ level do not overlap. Otherwise, no selection is made. 
In contrast, for relative error confidence intervals, we select the better estimator if the confidence interval for the relative error does not contain zero.

For the relative error-based methods, we observe that across the four scenarios, the EIF relative error method achieves the highest selection accuracy, followed by the IF relative error method. 
For the absolute error-based methods, the selection accuracy of the EIF absolute error method is lower in settings (a) and (c). This is because the improvement of the LASSO HTE estimator over the Boosting counterpart is relatively modest, and the gap between them is overshadowed by the wide confidence intervals, resulting in the methods being unable to confidently make a selection. 
In settings (b) and (d), the gap between the two HTE estimators is more pronounced, leading to a higher probability of correct selection. 
for the plug-in and IF absolute error methods, the confidence intervals are too wide, preventing any confident conclusions made across scenarios.

\begin{figure}[h]
        \centering
        \begin{minipage}{0.3\textwidth}
                \centering
                \includegraphics[clip, trim = 0cm 0cm 0cm 0cm, width = \textwidth]{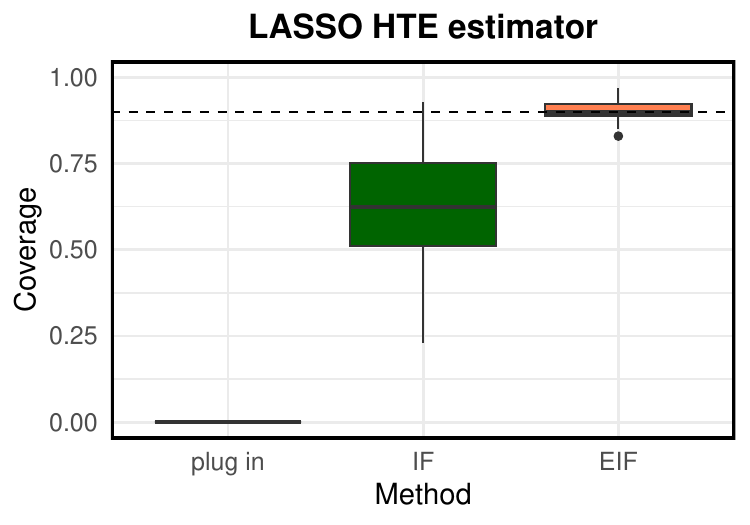}
                \subcaption*{}     
        \end{minipage}
        \begin{minipage}{0.3\textwidth}
                \centering
                \includegraphics[clip, trim = 0cm 0cm 0cm 0cm, width = \textwidth]{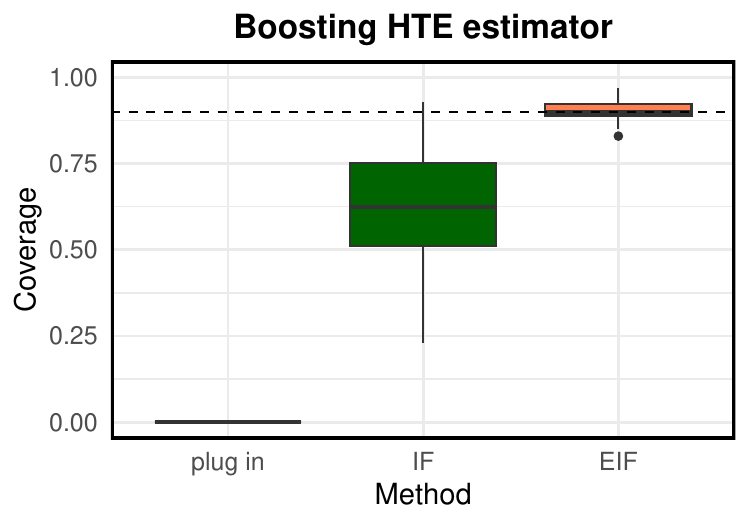}
                \subcaption*{(i) True}        
        \end{minipage}
        \begin{minipage}{0.3\textwidth}
                \centering
                \includegraphics[clip, trim = 0cm 0cm 0cm 0cm, width = \textwidth]{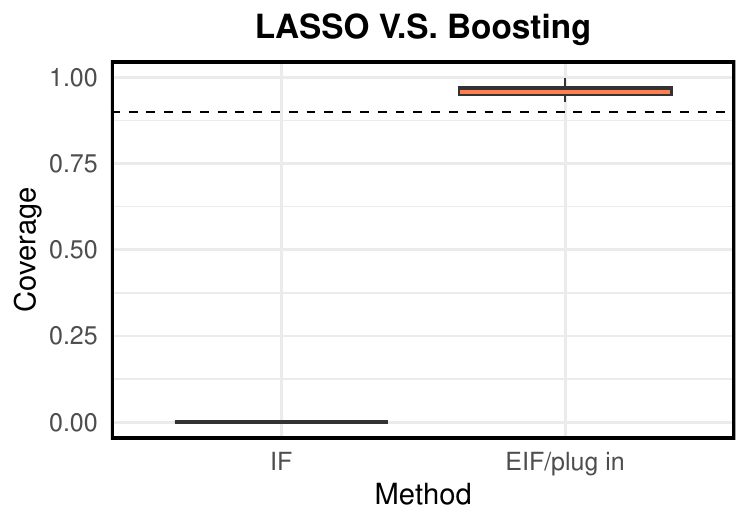}
                \subcaption*{}     
        \end{minipage}       
        \\ 
        \begin{minipage}{0.3\textwidth}
                \centering
                \includegraphics[clip, trim = 0cm 0cm 0cm 0cm, width = \textwidth]{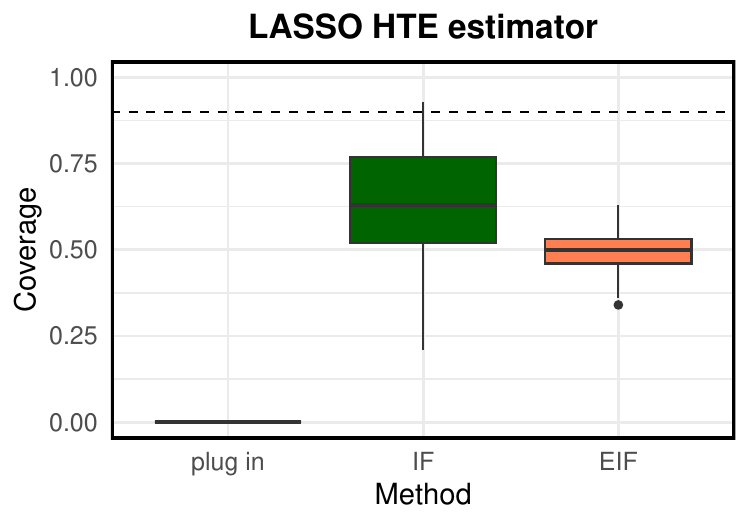}
                \subcaption*{}     
        \end{minipage}
        \begin{minipage}{0.3\textwidth}
                \centering
                \includegraphics[clip, trim = 0cm 0cm 0cm 0cm, width = \textwidth]{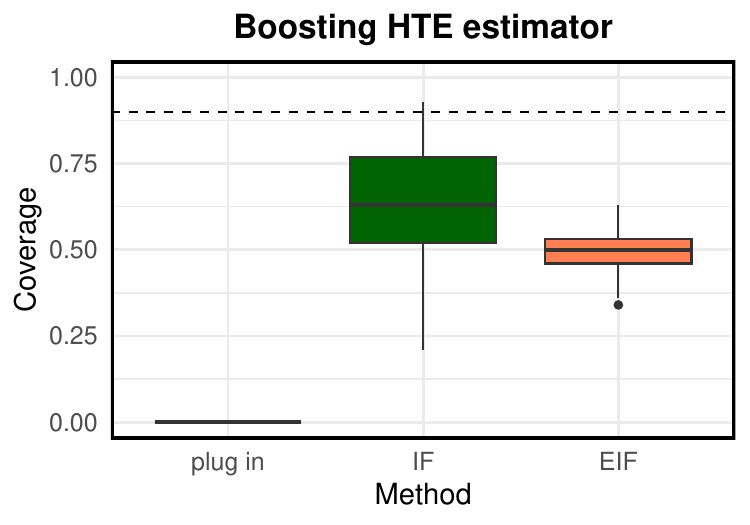}
                \subcaption*{(ii) Linear regression} 
        \end{minipage}
        \begin{minipage}{0.3\textwidth}
                \centering
                \includegraphics[clip, trim = 0cm 0cm 0cm 0cm, width = \textwidth]{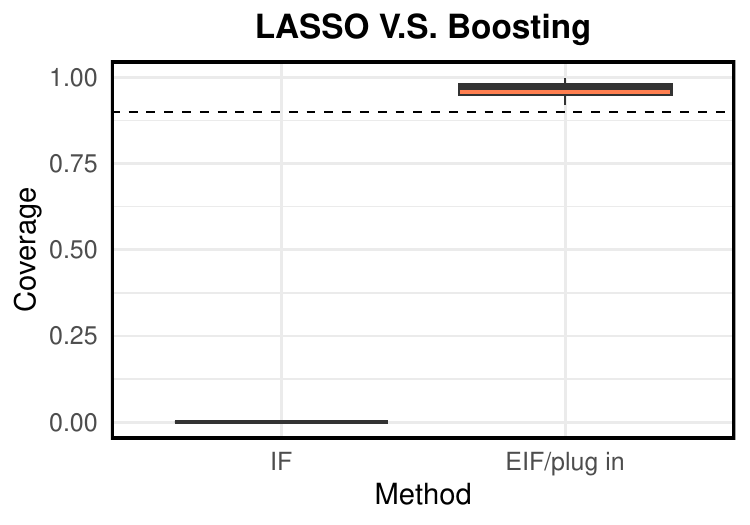}                \subcaption*{}
        \end{minipage}        \\
        \begin{minipage}{0.3\textwidth}
                \centering
                \includegraphics[clip, trim = 0cm 0cm 0cm 0cm, width = \textwidth]{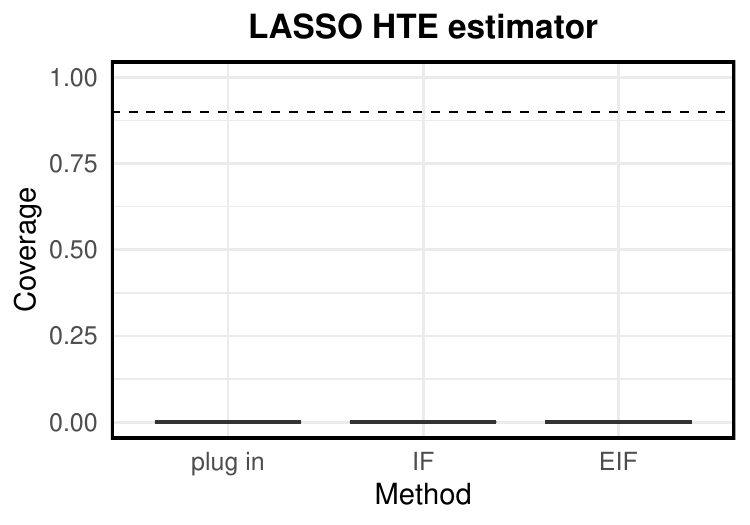}
                \subcaption*{}     
        \end{minipage}
        \begin{minipage}{0.3\textwidth}
                \centering
                \includegraphics[clip, trim = 0cm 0cm 0cm 0cm, width = \textwidth]{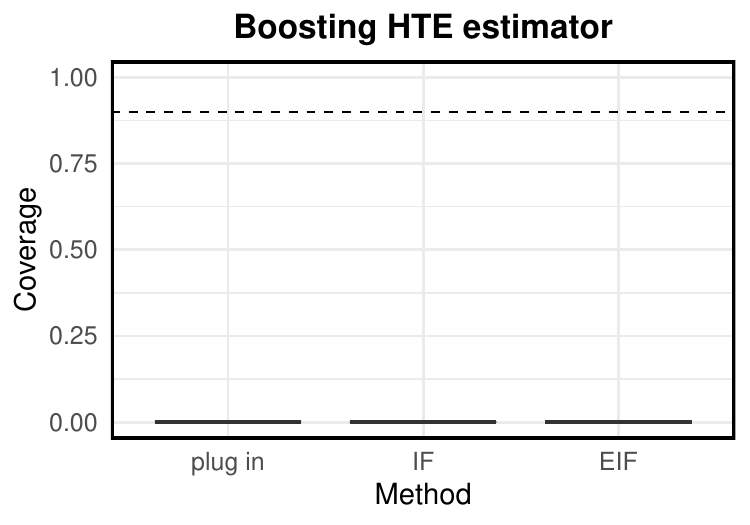}
                \subcaption*{(iii) Boosting with underfitting}
        \end{minipage}
        \begin{minipage}{0.3\textwidth}
                \centering
                \includegraphics[clip, trim = 0cm 0cm 0cm 0cm, width = \textwidth]{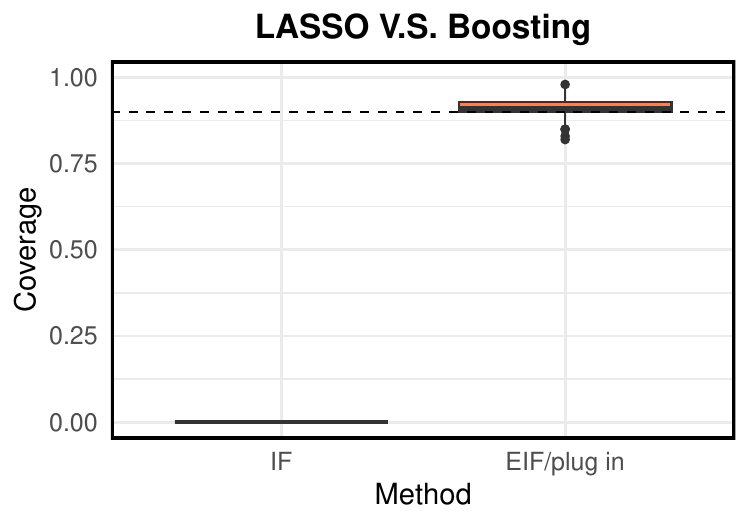}
                \subcaption*{}     
        \end{minipage}      
        \caption{ 
        Coverage of the estimated absolute (LASSO, Boosting)/relative (LASSO V.S. Boosting) error's $90\%$ confidence intervals across three methods (plug in, IF, EIF) over three nuisance functions ((i) true, (ii) estimated by linear regression, (iii) estimated by gradient boosting with underfitting).
        }
    \label{fig:simulation.coverage}
\end{figure}

\begin{figure}[h]
        \centering
        \begin{minipage}{0.3\textwidth}
                \centering
                \includegraphics[clip, trim = 0cm 0cm 0cm 0cm, width = \textwidth]{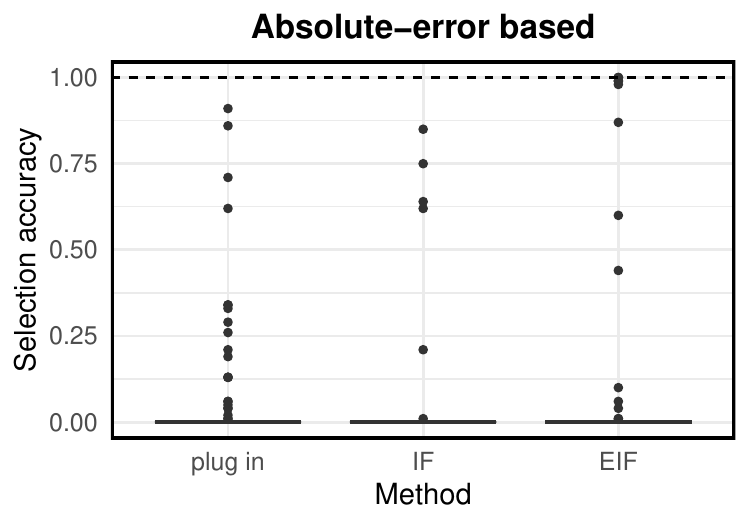}
        \end{minipage}
            \begin{minipage}{0.3\textwidth}
                \centering
                \includegraphics[clip, trim = 0cm 0cm 0cm 0cm, width = \textwidth]{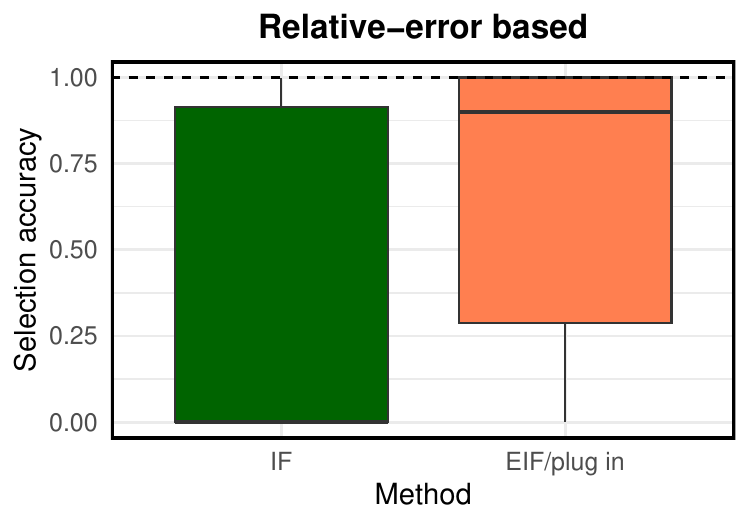}
        \end{minipage}
        \begin{center}            
        \text{(a) Linear $\mu_0(x)$, $\mu_1(x)$, linear $e(x)$} 
        \end{center}
                \begin{minipage}{0.3\textwidth}
                \centering
                \includegraphics[clip, trim = 0cm 0cm 0cm 0cm, width = \textwidth]{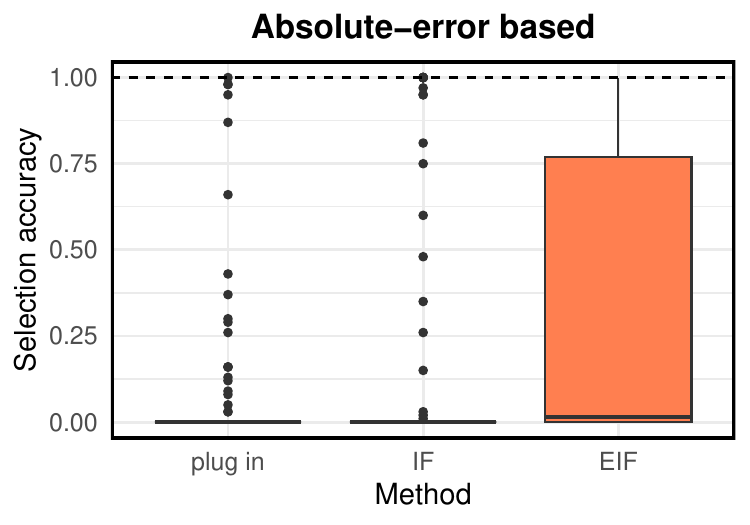}
        \end{minipage}
            \begin{minipage}{0.3\textwidth}
                \centering
                \includegraphics[clip, trim = 0cm 0cm 0cm 0cm, width = \textwidth]{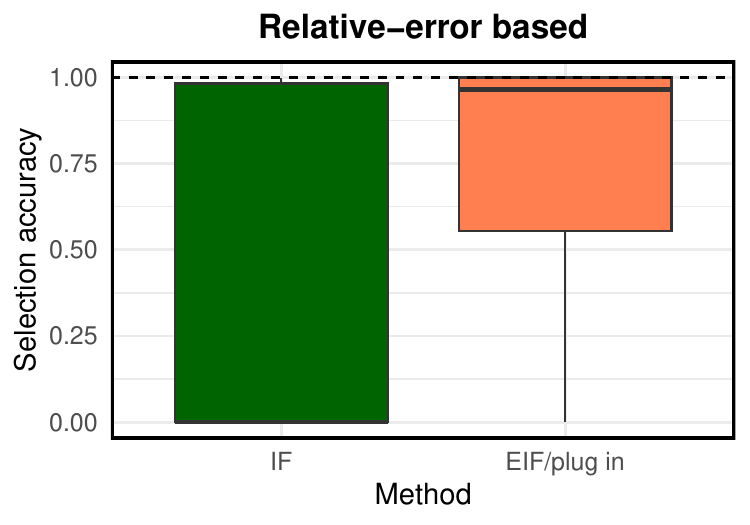}
        \end{minipage}
        \begin{center}            
        \text{(b) Nonlinear $\mu_0(x)$, $\mu_1(x)$, linear $e(x)$} 
        \end{center}
                \begin{minipage}{0.3\textwidth}
                \centering
                \includegraphics[clip, trim = 0cm 0cm 0cm 0cm, width = \textwidth]{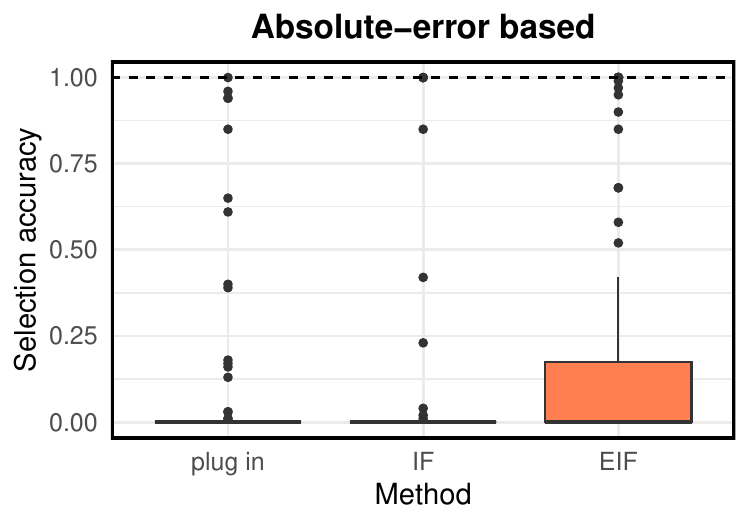}
        \end{minipage}
            \begin{minipage}{0.3\textwidth}
                \centering
                \includegraphics[clip, trim = 0cm 0cm 0cm 0cm, width = \textwidth]{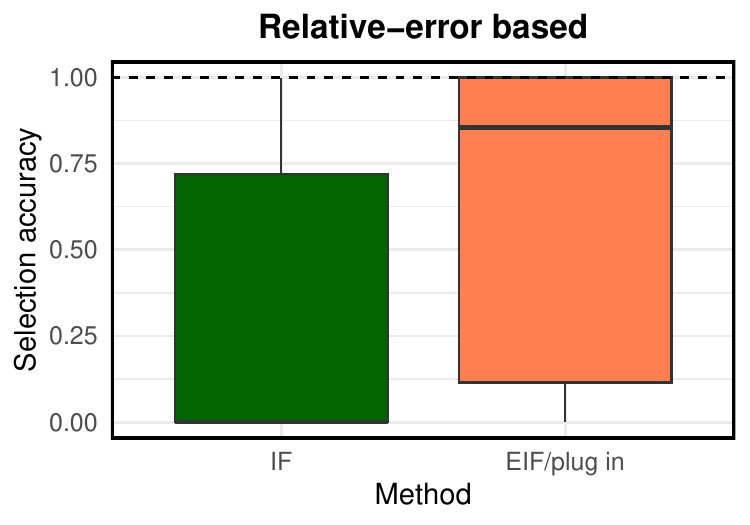}
        \end{minipage}
        \begin{center}            
        \text{(c) Linear $\mu_0(x)$, $\mu_1(x)$, nonlinear $e(x)$} 
        \end{center}
                \begin{minipage}{0.3\textwidth}
                \centering
                \includegraphics[clip, trim = 0cm 0cm 0cm 0cm, width = \textwidth]{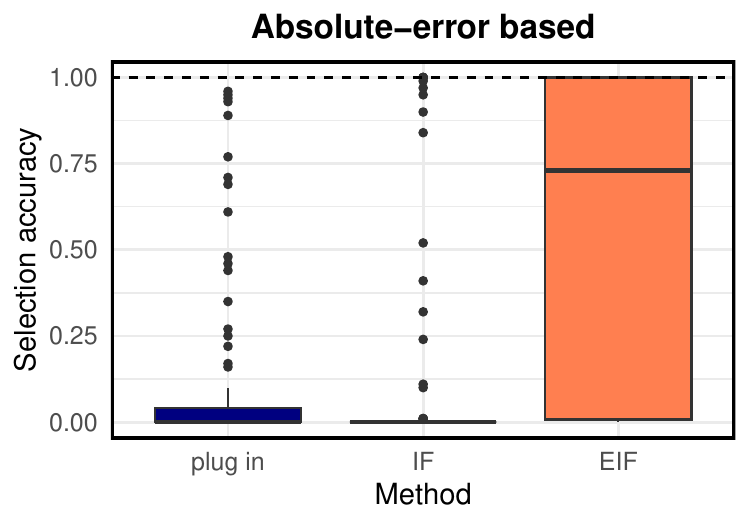}
        \end{minipage}
            \begin{minipage}{0.3\textwidth}
                \centering
                \includegraphics[clip, trim = 0cm 0cm 0cm 0cm, width = \textwidth]{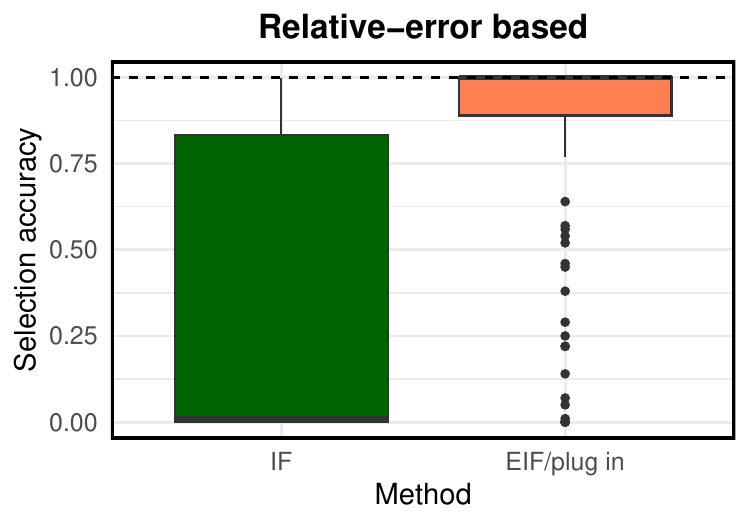}
        \end{minipage}
        \begin{center}            
        \text{(d) Nonlinear $\mu_0(x)$, $\mu_1(x)$, nonlinear $e(x)$} 
        \end{center}
        \caption{
        The probability of correctly selecting the better HTE estimator by comparing absolute/relative error's confidence intervals across three methods (plug in, IF, EIF) over four scenarios of the ACIC competition data ((a) to (d)).   
        }
    \label{fig:ACIC.selection.accuracy}
\end{figure}

\section{Discussions}\label{sec:discussion}

The evaluation of HTE estimators typically involves additional estimation steps due to the inherent missingness of counterfactuals. 
We advocate that the comparison of HTE estimators should account for this uncertainty incurred in the test stage to determine the more accurate estimator with statistical confidence. 
We propose to achieve this goal by constructing confidence intervals for the relative error between two HTE estimators rather than their absolute errors.
Explicitly, we derive a one-step correction estimator based on the efficient influence function, and provide the asymptotically narrowest confidence interval of the relative error.
The relative error confidence interval is less sensitive to nuisance function estimation errors and is more powerful in identifying the better HTE estimator when the candidate estimators are similar.
Through empirical evaluation on the benchmark dataset from the 2016 ACIC challenge, we show that our relative error confidence interval achieves robust coverage regardless of the nuisance estimators used, and correctly determines the better HTE estimator with high probability.

We discuss several potential directions for future research. 
\begin{itemize}
    \item In this paper, we focus on evaluating the relative performance of provided HTE estimators. An intriguing direction for future exploration is how the relative error evaluation method can be integrated into the training process to produce a more accurate HTE estimator. Specifically, one potential approach is to use cross-validation in the HTE estimation, where the validation step is carried out using our proposed evaluation method. 

    \item In this paper, we investigate the comparison of HTE estimators' errors averaged across the entire population. However, it's important to note that some HTE estimator could be overall accurate but considerably less precise in underrepresented groups due to insufficient training data. 
    This presents a fairness concern, especially since HTE estimators could influence downstream policy decisions. To address this, it would be beneficial to extend the evaluation to errors averaged over certain subgroups.
    This could also provide insights for improving the HTE estimators within these subgroups.

    \item In this paper, we bypass the degeneracy issue of the asymptotic distribution of the one-step correction estimator by shifting attention from the absolute estimand (absolute error) to its relative counterpart (relative error). It is of interest whether this approach could be generalized to address the degeneracy issue in other scenarios.
\end{itemize}

\printbibliography

\appendix

\section{Notations}\label{appe:sec:notation}
For a causal estimand $\phi$, we use $\IFC(\phi)$ to denote an influence function of $\phi$.
We derive the influence functions assuming the the covariates $x$ are discrete as in \cite{kennedy2022semiparametric}.
We use ${\eta}$ to denote the collection of true nuisance functions, and $\hat{\eta}$ to denote the collection of estimated nuisance functions.
We use $\PP_n$ to denote the distribution of $n$ test data points, and $\EE_n$ be the corresponding expectation (average over $n$ test data points).

\section{Proofs}\label{appe:sec:proof}
\begin{proof}[Proof of \Cref{theo:absolute.error}] 
    We first derive the efficient influence function for the absolute error~\eqref{defi:evaluation.error}. 
    We next prove the asymptotic distribution of the absolute error estimator~\eqref{eq:estimator.absolute.error}.

    \noindent \textbf{Derivation of the efficient influence function}. 
    We consider non-parametric models, where the tangent space contains the entire Hilbert space of mean-zero, finite-variance functions \parencite{tsiatis2006semiparametric}.
    In this case, the influence function is unique, and it must be efficient.
    Below we derive this influence function by applying differentiation rules and using putative influence functions for common causal estimands \parencite{kennedy2022semiparametric}.

    By the linearity of influence functions, 
    \begin{align*}
        \IFC(\phi(\hat{\tau}))
        = \underbrace{\IFC\left(\EE\left[\hat{\tau}^2(X_i)\right]\right)}_{:=\text{(a)}} 
        - 2\underbrace{\IFC\left(\EE\left[\hat{\tau}(X_i) {\tau}(X_i)\right]\right)}_{:=\text{(b)}}
        + \underbrace{\IFC\left(\EE\left[\tau^2(X_i)\right]\right)}_{:=\text{(c)}}.
    \end{align*}
    We deal with (a) to (c) one by one.
    For (a), $\hat{\tau}(x)$ is a known function of the covariates, and the influence function is 
    \begin{align}\label{proof:absolute.error.(a)}
        \text{(a)} = \hat{\tau}^2(X_i) - \EE[\hat{\tau}^2(X_i)].
    \end{align}
    For (b), $\EE\left[\hat{\tau}(X_i) {\tau}(X_i)\right]$ is a weighted average treatment effect with weights $\hat{\tau}(X_i)$, and the influence function is known to be
    \begin{align}\label{proof:absolute.error.(b)}
        \text{(b)} = \hat{\tau}(X_i) \cdot \left(\frac{W_i(Y_i - \mu_1(X_i))}{e(X_i)} + \mu_1(X_i) - \frac{(1-W_i)(Y_i - \mu_0(X_i))}{1-e(X_i)} - \mu_0(X_i) \right) - \EE[\hat{\tau}(X_i) \tau(X_i)].
    \end{align}
    For (c), we rewrite the second moment as a finite sum assuming the discreteness of the covariates,
    \begin{align}\label{proof:absolute.error.(c)}
        \begin{split}
    \text{(c)} 
        &= \sum_x \IFC(\tau^2(x)) \PP_X(dx) +  \sum_x \tau^2(x) \IFC(\PP_X(dx)) \quad (\text{Linearity, product rule of influence functions})\\
        &= \sum_x 2 \tau(x) \IFC(\tau(x)) \PP_X(dx) + \sum_x \tau^2(x) \IFC(\PP_X(dx)) \quad (\text{Chain rule of influence functions}) \\
        &= 2(\mu_1(X) - \mu_0(X)) \left(\frac{W(Y - \mu_1(X))}{e(X)} - \frac{(1-W)(Y - \mu_0(X))}{1-e(X)}\right) \\
        &\quad~ + (\mu_1(X_i) - \mu_0(X_i))^2- \EE[\tau^2(X_i)]. \quad (\text{Influence function of $\tau(x)$, $\PP_X(dx)$.})
        \end{split}
    \end{align}
    Finally, combing (a) to (c) and we have finished the derivation of the influence function for the absolute error~\eqref{defi:evaluation.error}.
       
    \noindent \textbf{Asymptotic distribution of the one-step correction estimator}.
    As in \cite{kennedy2022semiparametric}, we decompose the error into three terms,
    \begin{align*}
    \hat{\phi}(\hat{\tau}) - {\phi}(\hat{\tau})
        &= \EE_n[\hat{\psi}(\phi(\hat{\tau}); Z_i)] + \EE[\hat{\psi}(\phi(\hat{\tau}); Z_i)] - \EE[{\psi}(\phi(\hat{\tau}); Z_i)] \\
        &= \underbrace{\EE_n[{\psi}(\phi(\hat{\tau}); Z_i)] - \EE[{\psi}(\phi(\hat{\tau}); Z_i)]}_{:=\text{S}}  \\
        &\quad~+ \underbrace{\EE_n[\hat{\psi}(\phi(\hat{\tau}); Z_i) - {\psi}(\phi(\hat{\tau}); Z_i)] - \EE[\hat{\psi}(\phi(\hat{\tau}); Z_i) - {\psi}(\phi(\hat{\tau}); Z_i)]}_{:=N} \\
        &\quad~+\underbrace{{\phi}(\hat{\tau}) - \tilde{\phi}(\hat{\tau}) + \EE[\hat{\psi}(\hat{\tau}; Z_i)]}_{:=R},
    \end{align*}
    where $\tilde{\phi}(\hat{\tau})$ denote the absolute error regarding the estimated nuisance functions.
    We deal with $S$, $N$, $R$ one by one.
    For $S$, by the central limit theorem, 
    \begin{align}\label{proof:absolute.error.S}
        \begin{split}
            \sqrt{n} S \stackrel{d}{\to} \calN(0, V(\hat{\phi}(\hat{\tau}))).
        \end{split}
    \end{align}
    For $N$, the empirical process term, by the boundedness assumption ($Y_i$ is bounded, $e(X_i)$ is bounded away from zero and one, and the nuisance function estimators $\hat{\mu}_0(x)$, $\hat{\mu}_1(x)$ are bounded, $\hat{e}(x)$ is bounded away from zero and one), and that nuisance function estimators are consistent in $L_2$ norm, 
    \begin{align}\label{proof:absolute.error.N}
        n \var(N)
        = \var\left(\hat{\psi}(\phi(\hat{\tau}); Z_i) - {\psi}(\phi(\hat{\tau}); Z_i)\right)
        \to 0.
    \end{align}
    By Chebyshev's inequality, we have $\sqrt{n}N \stackrel{p}{\to} 0$.
    For $R$, the remainder term, we decompose it into the remainder term of (a), (b), and (c) in \eqref{proof:absolute.error.(a)}, \eqref{proof:absolute.error.(b)}, and \eqref{proof:absolute.error.(c)}, respectively.
    $R$(a) is exactly zero. 
    $R$(b) is the remainder term of weighted average treatment effect, and satisfies $o_p(n^{-1/2})$ under the boundedness assumption and the assumption that $\hat{\mu}_1(x)$, $\hat{\mu}_0(x)$, and $\hat{e}_1(x)$ converge at rate $o_p(n^{-1/4})$. 
    For $R$(c),
    \begin{align}\label{proof:absolute.error.R}
        \begin{split}
          R\text{(c)}
          &= -\EE\left[((\mu_1(X_i) - \mu_0(X_i)) - (\tilde{\mu}_1(X_i) - \tilde{\mu}_0(X_i)))^2\right] + 2 \EE\left[(\tilde{\mu}_1(X_i) - \tilde{\mu}_0(X_i)) r(X_i)\right], \\
          &\quad~r(x) := \left( \frac{e(x)}{\tilde{e}(x)} - 1 \right) \left( \mu_1(x) - \tilde{\mu}_1(x) \right) 
          - \left( \frac{1 - e(x)}{1 - \tilde{e}(x)} - 1 \right) \left( \mu_0(x) - \tilde{\mu}_0(x) \right).
        \end{split}
    \end{align}
    By the boundedness assumption, and the $o_p(n^{-1/4})$ convergence assumption of $\tilde{\mu}_1(x)$, $\tilde{\mu}_0(x)$, and $\tilde{e}(x)$,
    \begin{align*}
        \EE\left[r^2(X_i)\right]
        &\le 2 \EE \left[\left
        (\frac{1 - e(X_i)}{1 - \tilde{e}(X_i)} - 1 \right)^2 \left( \mu_0(X_i) - \tilde{\mu}_0(X_i) \right)^2\right]
        + \EE\left[\left( \frac{1 - e(x)}{1 - \tilde{e}(x)} - 1 \right)^2 \left( \mu_0(x) - \tilde{\mu}_0(x) \right)^2\right]
        = o_p(n^{-1}).
    \end{align*}
    By Cauchy-Schwarz inequality, 
    \begin{align*}
        \EE^2\left[(\tilde{\mu}_1(X_i) - \tilde{\mu}_0(X_i)) r(X_i)\right]
        &\le \EE\left[(\tilde{\mu}_1(X_i) - \tilde{\mu}_0(X_i))^2\right] \cdot  \EE\left[r^2(X_i)\right] 
        = o_p(n^{-1}).
    \end{align*}
    Combining $S$, $N$, $R$ and we have finished the proof of the asymptotic distribution of $\hat{\phi}(\hat{\tau})$ with the true variance ${V}(\hat{\phi}(\hat{\tau}))$.
    As for cross-fitting, we apply the argument in \cite{chernozhukov2018double}.

    For the variance estimator, by the boundedness assumption and the law of large number,
    \begin{align*}
        \hat{V}(\hat{\phi}(\hat{\tau})) \stackrel{p}{\to} {V}(\hat{\phi}(\hat{\tau})).
    \end{align*}
    By the assumption that $\PP(\hat{\tau}(X_i) \neq {\tau}(X_i)) > 0$, we have ${V}(\hat{\phi}(\hat{\tau})) > 0$.
    Finally, by Slutsky's lemma, we have finished the proof of \Cref{theo:absolute.error}.
    
\end{proof}

\begin{proof}[Proof of \Cref{theo:relative.error}]
    Similar to the proof of \Cref{theo:absolute.error},  we first derive the efficient influence function for the relative error~\eqref{defi:evaluation.error.relative}. 
    We next prove the asymptotic distribution of the absolute error estimator~\eqref{eq:estimator.relative.error}.

    \noindent \textbf{Derivation of the efficient influence function}.
    Note that $\delta(\hat{\tau}_1, \hat{\tau}_2) = \phi(\hat{\tau}_1) - \phi(\hat{\tau}_2)$. By the linearity of the influence function, the influence function for $\delta(\hat{\tau}_1, \hat{\tau}_2)$ is simply the difference between the influence functions of $\phi(\hat{\tau}_1)$ and $\phi(\hat{\tau}_2)$ shown in \Cref{theo:absolute.error}. 
    Alternatively, recognize that the estimand $\delta(\hat{\tau}_1, \hat{\tau}_2)$ is essentially a weighted average treatment effect, for which the influence function is a well-established result.

    \noindent \textbf{Asymptotic distribution of the one-step correction estimator}.
    Given that the estimand $\delta(\hat{\tau}_1, \hat{\tau}_2)$ is a weighted average treatment effect, the asymptotic distribution of the one-step correction estimator, under the assumptions of \Cref{theo:relative.error}, follows from the standard result \parencite{kennedy2022semiparametric}. 
    Regarding cross-fitting, we apply the approach outlined in \cite{chernozhukov2018double}. 
    
    Additionally, the degeneracy assumption $\mathbb{P}(\hat{\tau}_1(X_i) \neq \hat{\tau}_2(X_i)) > 0$ is required to ensure that the estimator, when scaled by $\sqrt{n}$, has a non-zero variance. Similar to the proof of \Cref{theo:absolute.error}.
\end{proof}

\begin{proposition}\label{prop:relative.error.CI.width}
    There exists $C > 0$ such that
    \begin{align*}
        V(\hat{\delta}(\hat{\tau}_1, \hat{\tau}_2)) \le C \cdot \EE[(\hat{\tau}_1(X) - \hat{\tau}_2(X))^2].
    \end{align*}
    This together with \Cref{theo:relative.error} implies the width of the confidence interval of the relative error estimator~\eqref{eq:estimator.relative.error} based on \Cref{theo:relative.error} is of order $\sqrt{\EE[(\hat{\tau}_1(X) - \hat{\tau}_2(X))^2]} \cdot {n}^{-1/2}$.
\end{proposition}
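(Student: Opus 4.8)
The plan is to exploit the fact that the (uncentered) efficient influence function in \eqref{eq:EIF.relative} factors as the product of $\hat\tau_1(X)-\hat\tau_2(X)$ with a \emph{bounded} term. Write $\Delta(X):=\hat\tau_1(X)-\hat\tau_2(X)$, $\Sigma(X):=\hat\tau_1(X)+\hat\tau_2(X)$, and let
\begin{align*}
    \xi(Z):=\frac{W(Y-\mu_1(X))}{e(X)}+\mu_1(X)-\frac{(1-W)(Y-\mu_0(X))}{1-e(X)}-\mu_0(X)
\end{align*}
be the AIPW pseudo-outcome for the HTE, so that $\EE[\xi(Z)\mid X]=\tau(X)$. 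Since $\hat\tau_1^2(X)-\hat\tau_2^2(X)=\Delta(X)\Sigma(X)$, the summand used by \Cref{algo:absolute.error} for the relative error (the influence function \eqref{eq:EIF.relative} without the $-\delta$ centering, which does not change the variance) equals $\Delta(X)\bigl(\Sigma(X)-2\xi(Z)\bigr)$, and hence $V(\hat\delta(\hat\tau_1,\hat\tau_2))=\var\bigl(\Delta(X)(\Sigma(X)-2\xi(Z))\bigr)$.

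First I would bound the variance by a second moment, $\var(A)\le\EE[A^2]$, giving
\begin{align*}
    V(\hat\delta(\hat\tau_1,\hat\tau_2))\le\EE\bigl[\Delta^2(X)\,(\Sigma(X)-2\xi(Z))^2\bigr].
\end{align*}
Then I would show $(\Sigma(X)-2\xi(Z))^2$ is bounded by a constant under condition (a) of \Cref{theo:relative.error}: boundedness of $Y$ forces $\mu_0,\mu_1$ (hence $\tau$) to be bounded, and $\eta<e(X)<1-\eta$ then yields $|\xi(Z)|\le B_1$ for an explicit $B_1=B_1(\|Y\|_\infty,\eta)$; together with boundedness of the supplied estimators this gives $|\Sigma(X)|\le B_2$ and therefore $(\Sigma(X)-2\xi(Z))^2\le(B_2+2B_1)^2=:C$. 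Pulling $C$ out of the expectation yields $V(\hat\delta(\hat\tau_1,\hat\tau_2))\le C\,\EE[\Delta^2(X)]=C\,\EE[(\hat\tau_1(X)-\hat\tau_2(X))^2]$, which is the stated inequality.

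For the consequence on the interval width: by \Cref{theo:relative.error} the half-width of the $1-\alpha$ interval produced by \Cref{algo:absolute.error} is $q_{1-\alpha/2}\sqrt{\hat V(\hat\delta(\hat\tau_1,\hat\tau_2))/n}$, and since $\hat V(\hat\delta(\hat\tau_1,\hat\tau_2))\stackrel{p}{\to}V(\hat\delta(\hat\tau_1,\hat\tau_2))$ (consistency of the variance estimator, already part of \Cref{theo:relative.error}), the half-width is asymptotically at most $q_{1-\alpha/2}\sqrt{C/n}\,\sqrt{\EE[(\hat\tau_1(X)-\hat\tau_2(X))^2]}$, i.e. of order $\sqrt{\EE[(\hat\tau_1(X)-\hat\tau_2(X))^2]}\cdot n^{-1/2}$.

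The one delicate point — and the step I expect to be the real obstacle rather than the routine algebra — is that boundedness of the provided estimators $\hat\tau_1,\hat\tau_2$ is not among the stated hypotheses of \Cref{theo:relative.error}: it is needed only to control the $\Delta^2(X)\Sigma^2(X)=(\hat\tau_1^2(X)-\hat\tau_2^2(X))^2$ part (the $\xi$ contribution is automatically controlled by condition (a)). I would handle this by either adding boundedness of $\hat\tau_1,\hat\tau_2$ as a mild standard assumption (predictions are routinely truncated to the range of $Y$) or by simply letting the constant $C$ depend on $\|\hat\tau_1\|_\infty+\|\hat\tau_2\|_\infty$; in both cases the essential content, namely proportionality of the asymptotic variance to $\EE[(\hat\tau_1(X)-\hat\tau_2(X))^2]$, is unaffected. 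The passage from $\var$ to $\EE[\cdot^2]$ is a crude but sufficient upper bound, so no refinement there is required.
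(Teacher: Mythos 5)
Your proof is correct and takes essentially the same route as the paper: the paper's one-line argument bounds $V(\hat{\delta}(\hat{\tau}_1,\hat{\tau}_2))$ by $\EE[(\hat{\tau}_1(X)-\hat{\tau}_2(X))^2]$ times the (bounded) second moment of the AIPW pseudo-outcome, invoking Cauchy--Schwarz and the boundedness assumption. Your version is in fact slightly more careful than the paper's, since you explicitly keep the $\hat{\tau}_1^2(X)-\hat{\tau}_2^2(X)=\Delta(X)\Sigma(X)$ contribution (which the paper's displayed inequality silently drops) and you correctly flag that boundedness of $\hat{\tau}_1,\hat{\tau}_2$ is tacitly required --- the paper relies on the same unstated assumption when it writes ``Boundedness assumption.''
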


\begin{proof}[Proof of \Cref{prop:relative.error.CI.width}]
    By Cauchy-Schwarz inequality, 
    \begin{align*}
        V(\hat{\delta}(\hat{\tau}_1, \hat{\tau}_2))
        &\le \EE\left[(\hat{\tau}_2(X_i) - \hat{\tau}_1(X_i))^2\right] \cdot \EE\left[\left(\frac{W_i(Y_i - {\mu}_1(X_i))}{{e}(X_i)} + {\mu}_1(X_i) - \frac{(1-W_i)(Y_i - {\mu}_0(X_i))}{1-{e}(X_i)} - {\mu}_0(X_i)\right)^2\right] \\
        &\le C \cdot \EE\left[(\hat{\tau}_2(X_i) - \hat{\tau}_1(X_i))^2\right].  \quad (\text{Boundedness assumption})
    \end{align*}
\end{proof}

\begin{corollary}[DINA]\label{prop:DINA}
    \Cref{theo:absolute.error} and \Cref{theo:relative.error} apply to the causal estimand in \Cref{rmk:DINA}.
\end{corollary}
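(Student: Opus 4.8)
The plan is to re-run the two-part argument of the proofs of \Cref{theo:absolute.error} and \Cref{theo:relative.error}, carrying the link function $\eta(\cdot)$ through each step. Write the DINA absolute error as $\phi(\hat\tau) = \EE[\hat\tau^2(X)] - 2\EE[\hat\tau(X)\tau(X)] + \EE[\tau^2(X)]$ with $\tau(x) = \eta(\mu_1(x)) - \eta(\mu_0(x))$, and split its influence function by linearity into the three pieces (a), (b), (c) as in \eqref{proof:absolute.error.(a)}--\eqref{proof:absolute.error.(c)}. Piece (a) is unchanged since $\hat\tau$ is a fixed function of the covariates. For (b) and (c) the only new ingredient is the pathwise derivative of $x \mapsto \eta(\mu_w(x))$: by the chain rule of influence functions $\IFC(\eta(\mu_w(x))) = \eta'(\mu_w(x))\,\IFC(\mu_w(x))$, and $\IFC(\mu_1(x))$ contributes the familiar $\tfrac{W(Y-\mu_1(X))}{e(X)} + \mu_1(X)$ term (analogously for $w=0$). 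Substituting, (b) becomes $\hat\tau(X)\big(\eta'(\mu_1(X))\tfrac{W(Y-\mu_1(X))}{e(X)} + \eta(\mu_1(X)) - \eta'(\mu_0(X))\tfrac{(1-W)(Y-\mu_0(X))}{1-e(X)} - \eta(\mu_0(X))\big) - \EE[\hat\tau(X)\tau(X)]$, and (c) follows from the product and chain rules exactly as in \eqref{proof:absolute.error.(c)} with each $\mu_w$ replaced by $\eta(\mu_w)$ and an extra factor $\eta'(\mu_w)$ attached to the corresponding score term. Collecting the three pieces recovers precisely the influence function underlying the one-step estimator displayed in \Cref{rmk:DINA}; uniqueness of the influence function in the nonparametric model (as in the proof of \Cref{theo:absolute.error}) yields efficiency for free, and setting $\eta = \mathrm{id}$ recovers \eqref{eq:EIF}.

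For the asymptotic distribution, reuse the $S + N + R$ decomposition verbatim. Here one adds to condition (a) the mild regularity that $\eta$ is continuously differentiable with $\eta'$ bounded and Lipschitz on the range of the conditional means (which is compact once $Y$ is bounded and, when needed, the parameter stays away from the boundary of the exponential family); this keeps $\tilde\eta_w = \eta(\tilde\mu_w)$ and $\tilde\eta'_w = \eta'(\tilde\mu_w)$ bounded and gives $\|\tilde\eta_w - \eta(\mu_w)\|_2 = O(\|\tilde\mu_w - \mu_w\|_2)$ and $\|\tilde\eta'_w - \eta'(\mu_w)\|_2 = O(\|\tilde\mu_w - \mu_w\|_2)$. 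With these bounds the empirical process term $N$ still has variance $o(n^{-1})$ by boundedness plus $L_2$-consistency, and $S$ obeys the CLT unchanged. The only genuinely new computation is the remainder $R$: expanding $R\text{(b)}$ and $R\text{(c)}$ and Taylor-expanding $\eta$ around $\tilde\mu_w$ shows, as in \eqref{proof:absolute.error.R}, that $R$ is a sum of products of the form (error of $\tilde e$)$\times$(error of $\tilde\mu_w$) and (error of $\tilde\mu_w$)$\times$(error of $\tilde\mu_{w'}$), plus a second-order Taylor term $O(\|\tilde\mu_w-\mu_w\|_2^2)$, each controlled by Cauchy--Schwarz, so $R = o_p(n^{-1/2})$ under the $o_p(n^{-1/4})$ product-rate condition of \Cref{theo:absolute.error} (respectively the weaker product condition of \Cref{theo:relative.error}). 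Slutsky's lemma together with consistency of the plug-in variance estimator closes the argument exactly as before; cross-fitting is handled via \cite{chernozhukov2018double}.

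For the relative-error claim, note that $\delta(\hat\tau_1,\hat\tau_2) = \phi(\hat\tau_1) - \phi(\hat\tau_2)$ is again a weighted average of the transformed conditional-mean contrast with fixed weights $\hat\tau_1 - \hat\tau_2$, so its influence function is the difference of the two DINA absolute-error influence functions above; the $\EE[\tau^2(X)]$ contributions cancel, the bias-correction term collapses to the displayed DINA relative-error estimator, and the double robustness under condition (d) and the $\sqrt{\EE[(\hat\tau_1-\hat\tau_2)^2]}\cdot n^{-1/2}$ confidence-interval width (\Cref{prop:relative.error.CI.width}) transfer directly with $\mu_w$ replaced by $\eta(\mu_w)$. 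The main obstacle is bookkeeping: one must check that the chain-rule factor $\eta'(\tilde\mu_w)$ sits in exactly the right place so that the bias-correction still annihilates the first-order bias of the plug-in — concretely, that $\EE[\eta'(\mu_w(X))\tfrac{W(Y-\mu_w(X))}{e(X)}\mid X]=0$ and that the second-order Taylor remainder of $\eta$ is genuinely absorbed into the product-rate terms rather than producing a standalone $O(\|\tilde\mu_w-\mu_w\|_2^2)$ bias that fails the rate requirement. Once the smoothness and boundedness conditions on $\eta$ are in place this verification is routine, and \Cref{prop:DINA} follows.
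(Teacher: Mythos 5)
Your proposal takes essentially the same route as the paper's proof, which simply invokes the chain rule for influence functions to obtain the DINA influence functions and then reapplies the asymptotic arguments of \Cref{theo:absolute.error} and \Cref{theo:relative.error}; your version is a far more detailed elaboration of that two-sentence sketch, including the regularity conditions on $\eta$ that the paper leaves implicit. The one substantive point you flag---whether the second-order Taylor remainder of $\eta$ leaves a standalone $O(\EE[(\tilde{\mu}_w(X) - \mu_w(X))^2])$ bias that would require the stronger $o_p(n^{-1/4})$ rate (and would break the exact double robustness of condition (d)) for the relative error under a nonlinear link---is a genuine subtlety, but the paper's own proof does not address it either.
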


\begin{proof}[Proof of \Cref{prop:DINA}]

    To derive the influence functions for the causal estimand in \Cref{rmk:DINA}, we can apply the chain rule to the influence functions obtained in the proofs of \Cref{theo:absolute.error} and \Cref{theo:relative.error}. 

    For the asymptotic distribution of the one-step correction estimator, we can apply the arguments in \Cref{theo:absolute.error} and \Cref{theo:relative.error}.
\end{proof}

\section{Additional empirical experiments}\label{appe:sec:simulation.additional}
We provide additional demonstrations of the hypothetical real data analysis in \Cref{sec:simulation}. The datasets, HTE estimators, and evaluation methods remain the same. 
In \Cref{appe:sec:simulation.width}, we present the widths of the confidence intervals, which can serve as an indication of the statistical power (a shorter confidence intervals suggests higher power). 
In \Cref{appe:sec:simulation.estimation.error}, we report the estimation errors for both the absolute and relative error estimators.

\subsection{Widths of confidence intervals}\label{appe:sec:simulation.width}

In \Cref{fig:ACIC.width}, we report the widths of the estimated absolute (LASSO, Boosting) and relative (LASSO vs. Boosting) error 90\% confidence intervals across three methods (plug-in, IF, EIF) over four scenarios from the ACIC competition data ((a) to (d)).
Shorter widths indicate higher power of identifying the better HTE estimator.

We make the following observations:
\begin{itemize}
    \item Across all settings and methods, the confidence intervals for the relative errors are significantly shorter than those for the absolute errors.
    \item For the absolute error estimators, our proposal performs significantly better than the estimator in \cite{alaa2019validating} in scenario (d), and the two methods are comparable in scenarios (a) to (c). The plug-in estimator produces the widest intervals across all settings.
\end{itemize}

\begin{figure}[ht]
        \centering
        \begin{minipage}{0.3\textwidth}
                \centering
                \includegraphics[clip, trim = 0cm 0cm 0cm 0cm, width = \textwidth]{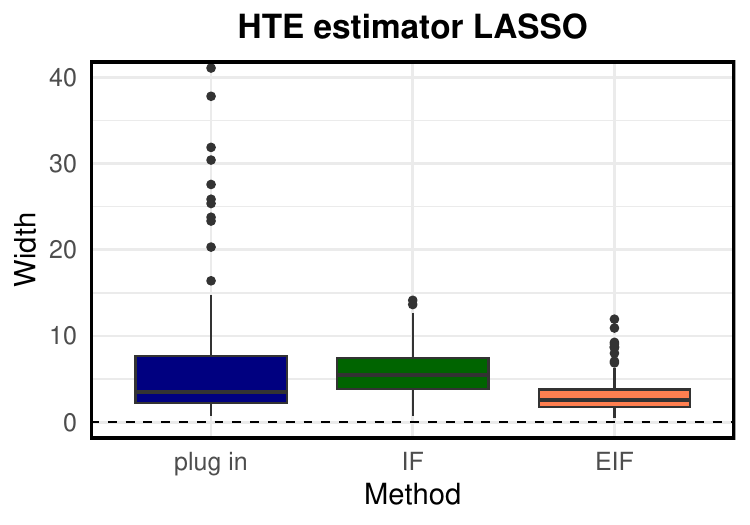}
        \end{minipage}
        \begin{minipage}{0.3\textwidth}
                \centering
                \includegraphics[clip, trim = 0cm 0cm 0cm 0cm, width = \textwidth]{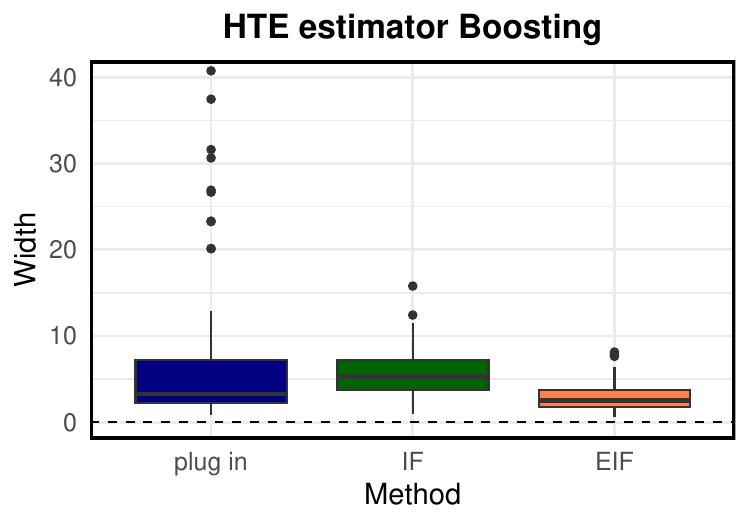}
        \end{minipage}
        \begin{minipage}{0.3\textwidth}
                \centering
                \includegraphics[clip, trim = 0cm 0cm 0cm 0cm, width = \textwidth]{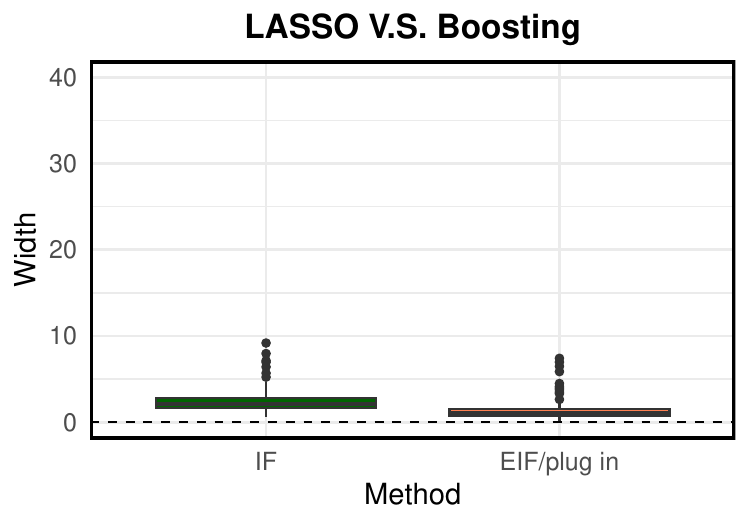}
        \end{minipage}        
        \\
        \begin{center}
        \text{(a) Linear $\mu_0(x)$, $\mu_1(x)$, linear $e(x)$}\\
        \end{center}
        \begin{minipage}{0.3\textwidth}
                \centering
                \includegraphics[clip, trim = 0cm 0cm 0cm 0cm, width = \textwidth]{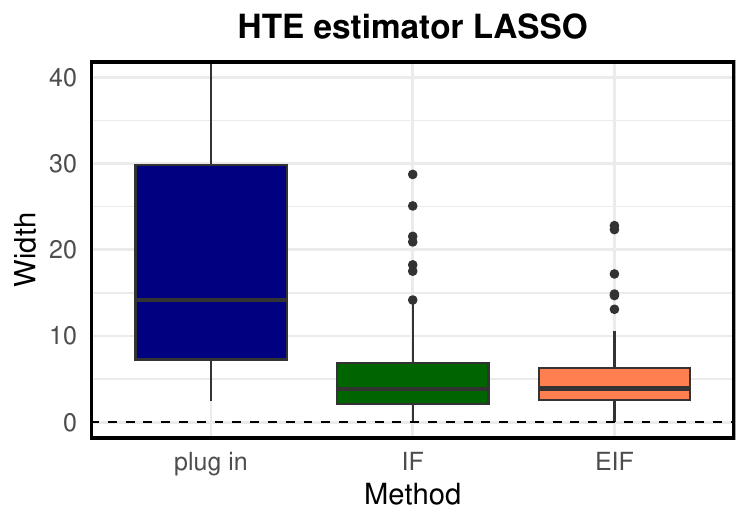}
        \end{minipage}
        \begin{minipage}{0.3\textwidth}
                \centering
                \includegraphics[clip, trim = 0cm 0cm 0cm 0cm, width = \textwidth]{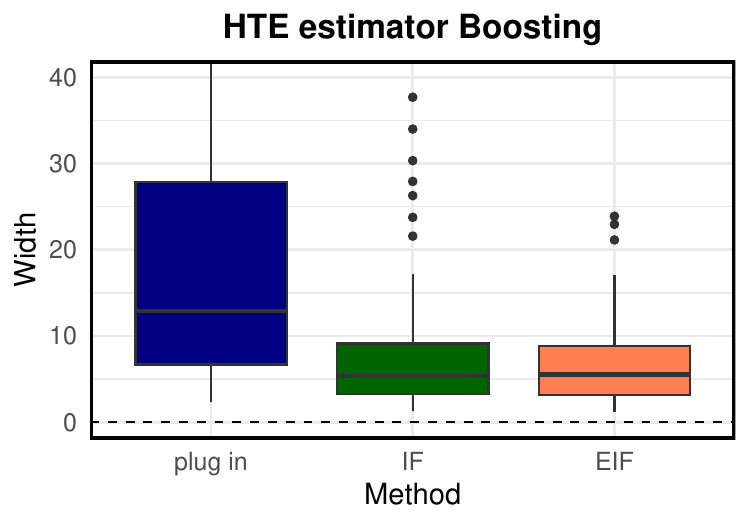}
        \end{minipage}
        \begin{minipage}{0.3\textwidth}
                \centering
                \includegraphics[clip, trim = 0cm 0cm 0cm 0cm, width = \textwidth]{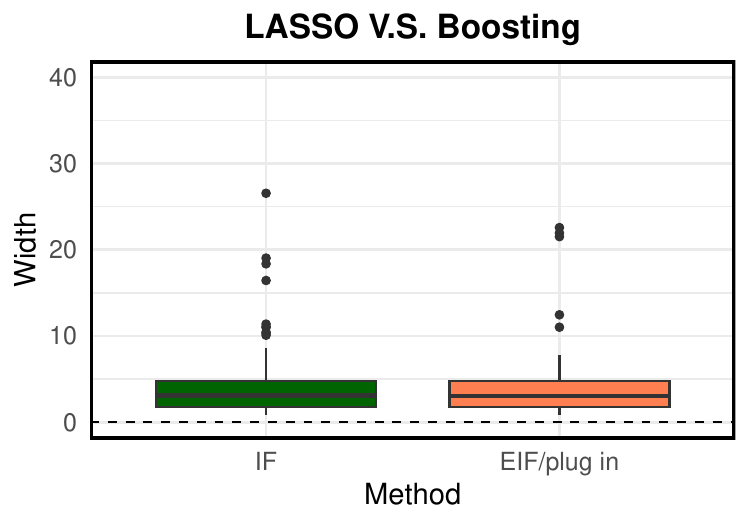}
        \end{minipage}     \\
                \begin{center}
        \text{(b) Nonlinear $\mu_0(x)$, $\mu_1(x)$, linear $e(x)$}  
        \end{center}
        \begin{minipage}{0.3\textwidth}
                \centering
                \includegraphics[clip, trim = 0cm 0cm 0cm 0cm, width = \textwidth]{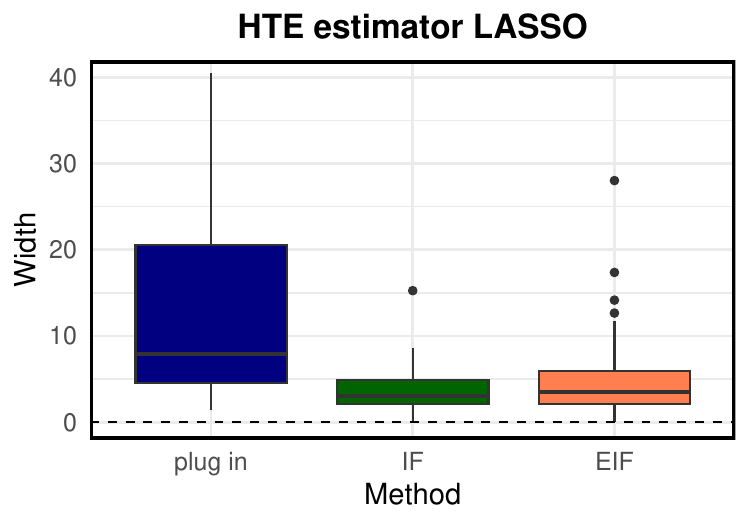}
        \end{minipage}
        \begin{minipage}{0.3\textwidth}
                \centering
                \includegraphics[clip, trim = 0cm 0cm 0cm 0cm, width = \textwidth]{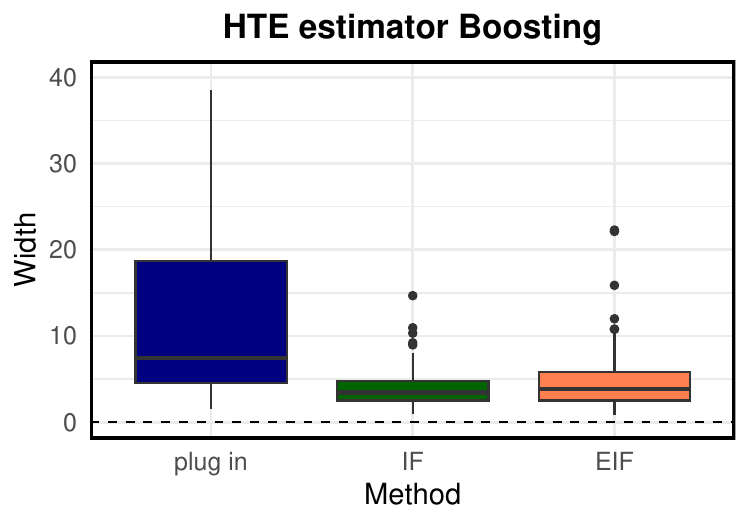}
        \end{minipage}
        \begin{minipage}{0.3\textwidth}
                \centering
                \includegraphics[clip, trim = 0cm 0cm 0cm 0cm, width = \textwidth]{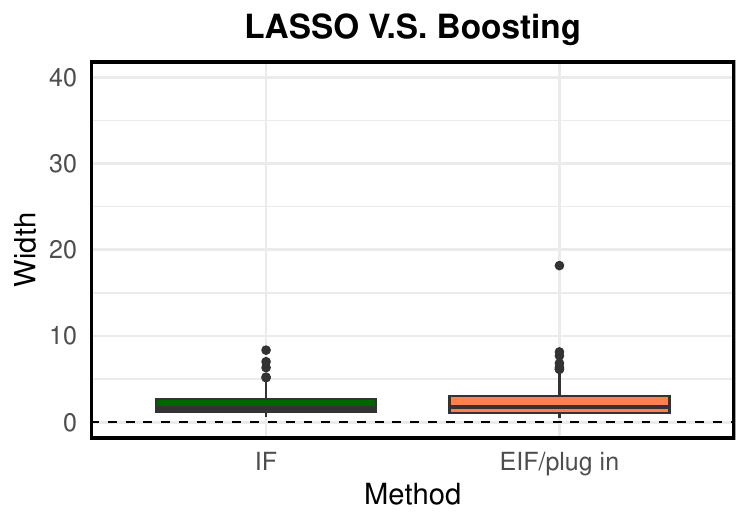}
        \end{minipage}        \\
                \begin{center}
        \text{(c) Linear $\mu_0(x)$, $\mu_1(x)$, nonlinear $e(x)$}      
        \end{center}
        \begin{minipage}{0.3\textwidth}
                \centering
                \includegraphics[clip, trim = 0cm 0cm 0cm 0cm, width = \textwidth]{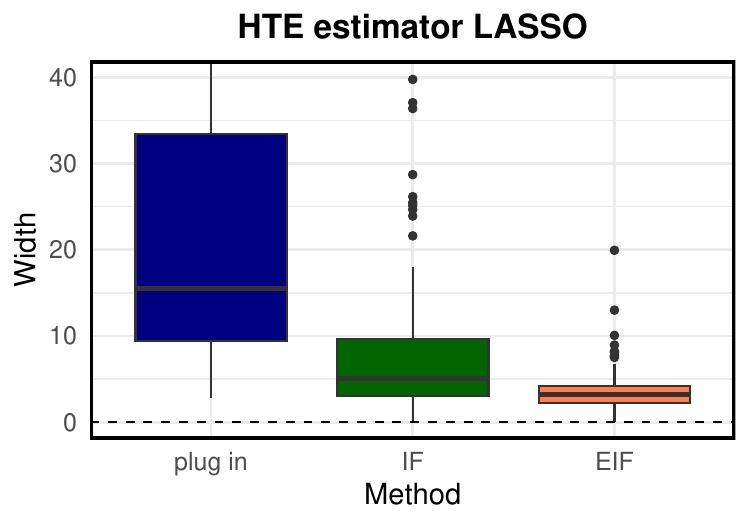}
        \end{minipage}
        \begin{minipage}{0.3\textwidth}
                \centering
                \includegraphics[clip, trim = 0cm 0cm 0cm 0cm, width = \textwidth]{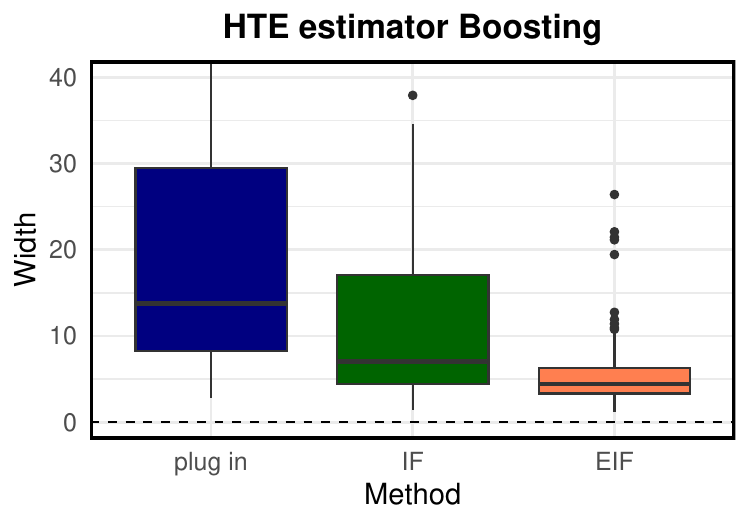}
        \end{minipage}
        \begin{minipage}{0.3\textwidth}
                \centering
                \includegraphics[clip, trim = 0cm 0cm 0cm 0cm, width = \textwidth]{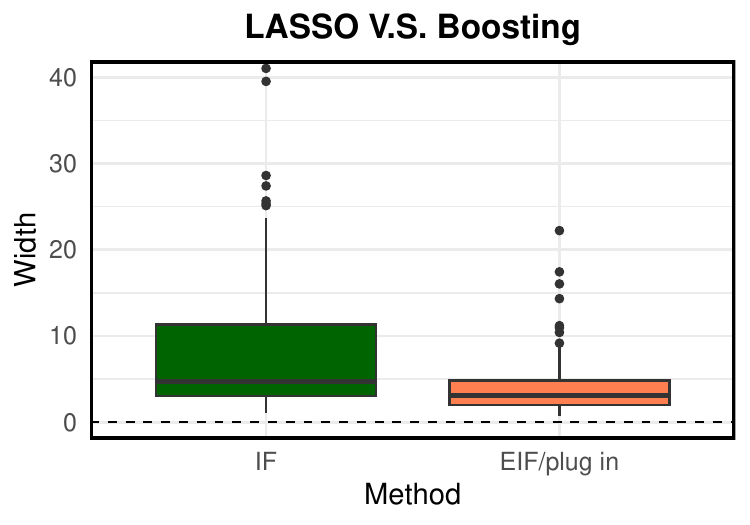}
        \end{minipage}        \\
                \begin{center}
        \text{(d) Nonlinear $\mu_0(x)$, $\mu_1(x)$, nonlinear $e(x)$} 
        \end{center}
        \caption{
        Width of the estimated absolute (LASSO, Boosting)/relative (LASSO V.S. Boosting) error's $90\%$ confidence intervals across three methods (plug in, IF, EIF) over four scenarios of the ACIC competition data ((a) to (d)). 
        }
    \label{fig:ACIC.width}
\end{figure}

\subsection{Error of estimated errors}\label{appe:sec:simulation.estimation.error}

Error of the estimated absolute (LASSO, Boosting)/relative (LASSO V.S. Boosting) error across three methods (plug in, IF, EIF) over four scenarios of the ACIC competition data ((a) to (d)). 
Error of the estimated error is defined as the absolute difference between the estimated error and the corresponding oracle value.
Smaller errors suggest that the estimator has higher accuracy thus more favorable.

We make the following observations.
\begin{itemize}
    \item Across all settings and methods, the errors of the relative error estimators are significantly smaller than those for the absolute errors.
    \item For the absolute error estimators, our proposal performs significantly better than the estimator in \cite{alaa2019validating} and the plug-in estimator in scenario (d). The plug-in estimator is unfavorable from scenarios (b) to (d).
\end{itemize}

\begin{figure}[ht]
        \centering
        \begin{minipage}{0.3\textwidth}
                \centering
                \includegraphics[clip, trim = 0cm 0cm 0cm 0cm, width = \textwidth]{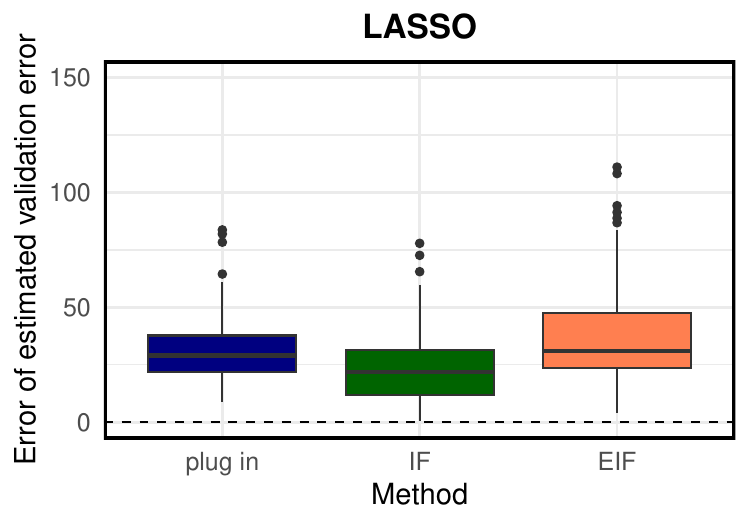}
        \end{minipage}
        \begin{minipage}{0.3\textwidth}
                \centering
                \includegraphics[clip, trim = 0cm 0cm 0cm 0cm, width = \textwidth]{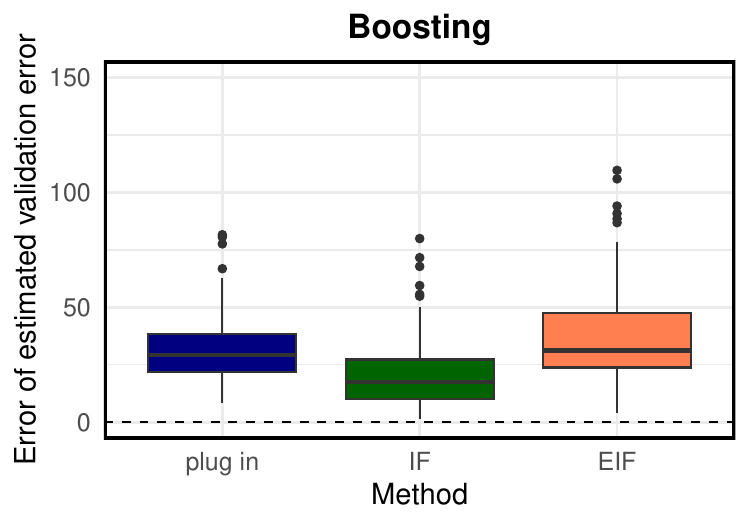}
        \end{minipage}
        \begin{minipage}{0.3\textwidth}
                \centering
                \includegraphics[clip, trim = 0cm 0cm 0cm 0cm, width = \textwidth]{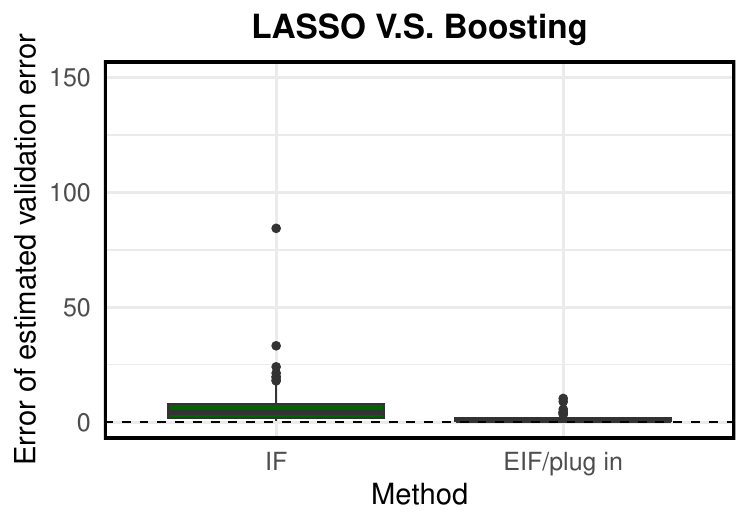}
        \end{minipage}   
        \\
        \begin{center}
        \text{(a) Linear $\mu_0(x)$, $\mu_1(x)$, linear $e(x)$}
        \\            
        \end{center}
        \begin{minipage}{0.3\textwidth}
                \centering
                \includegraphics[clip, trim = 0cm 0cm 0cm 0cm, width = \textwidth]{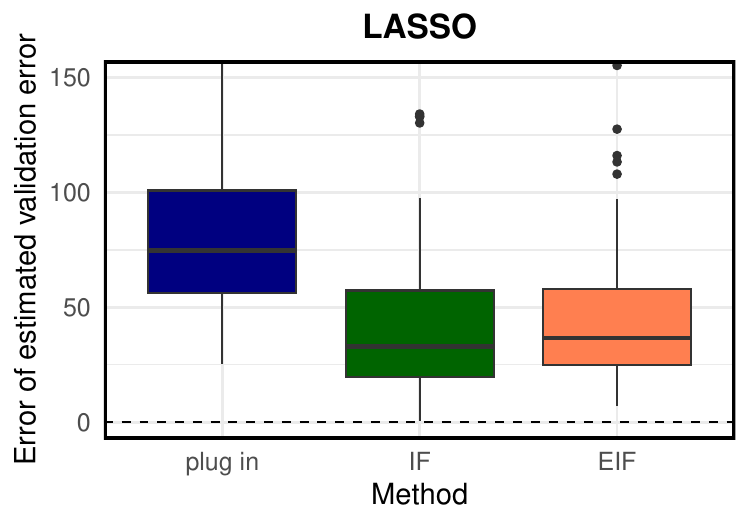}
        \end{minipage}
        \begin{minipage}{0.3\textwidth}
                \centering
                \includegraphics[clip, trim = 0cm 0cm 0cm 0cm, width = \textwidth]{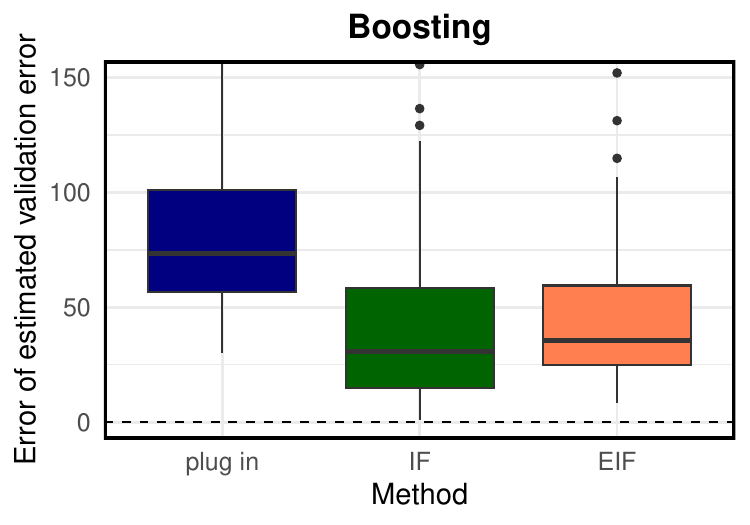}
        \end{minipage}
        \begin{minipage}{0.3\textwidth}
                \centering
                \includegraphics[clip, trim = 0cm 0cm 0cm 0cm, width = \textwidth]{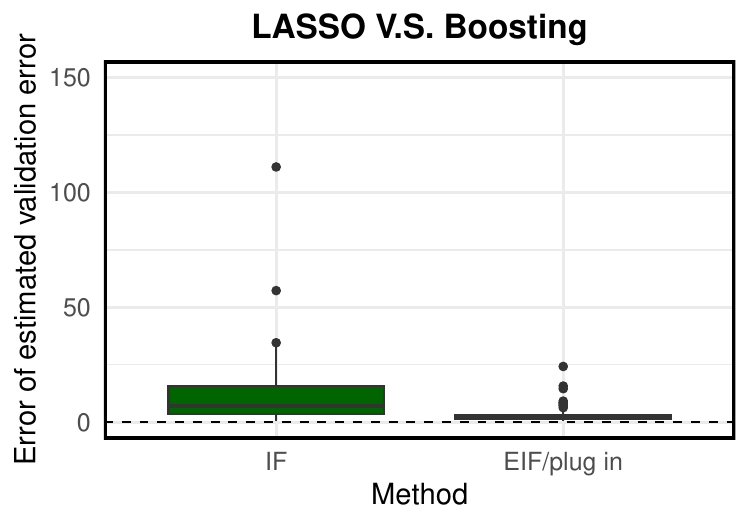}
        \end{minipage}     \\   
        \begin{center}        
        \text{(b) Nonlinear $\mu_0(x)$, $\mu_1(x)$, linear $e(x)$} \\ 
        \end{center}
        \begin{minipage}{0.3\textwidth}
                \centering
                \includegraphics[clip, trim = 0cm 0cm 0cm 0cm, width = \textwidth]{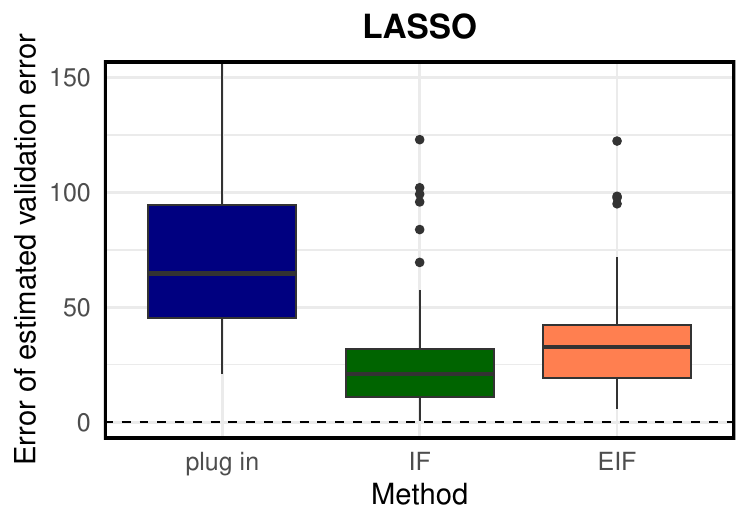}
        \end{minipage}        
        \begin{minipage}{0.3\textwidth}
                \centering
                \includegraphics[clip, trim = 0cm 0cm 0cm 0cm, width = \textwidth]{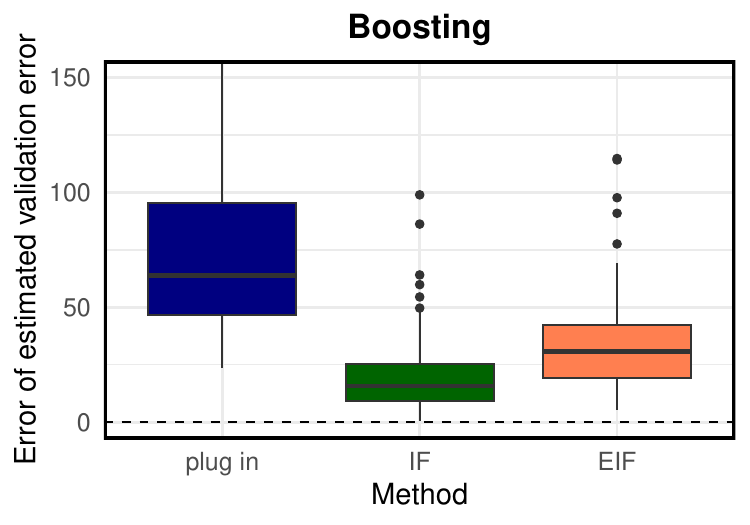}
        \end{minipage}
        \begin{minipage}{0.3\textwidth}
                \centering
                \includegraphics[clip, trim = 0cm 0cm 0cm 0cm, width = \textwidth]{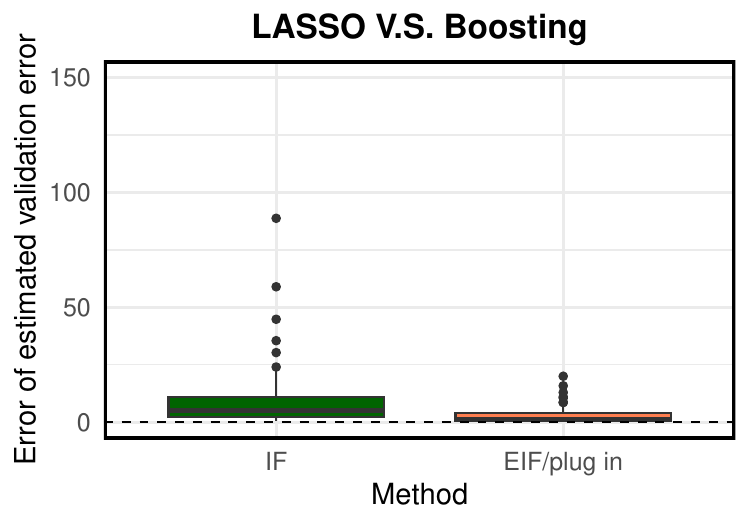}
        \end{minipage}\\        
        \begin{center}
        \text{(c) Linear $\mu_0(x)$, $\mu_1(x)$, nonlinear $e(x)$}   \\   
        \end{center}
        \begin{minipage}{0.3\textwidth}
                \centering
                \includegraphics[clip, trim = 0cm 0cm 0cm 0cm, width = \textwidth]{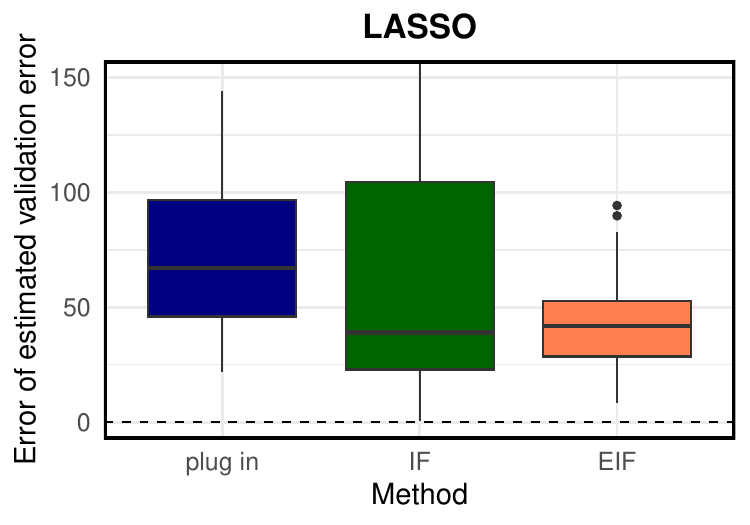}
        \end{minipage}
        \begin{minipage}{0.3\textwidth}
                \centering
                \includegraphics[clip, trim = 0cm 0cm 0cm 0cm, width = \textwidth]{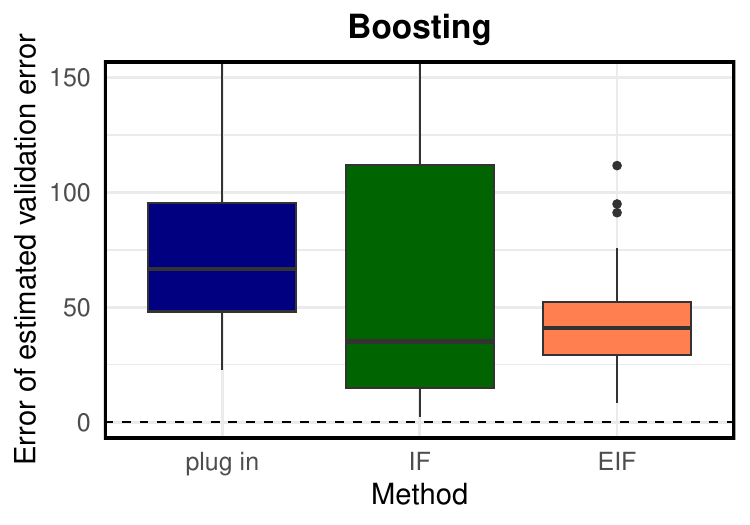}
        \end{minipage}
        \begin{minipage}{0.3\textwidth}
                \centering
                \includegraphics[clip, trim = 0cm 0cm 0cm 0cm, width = \textwidth]{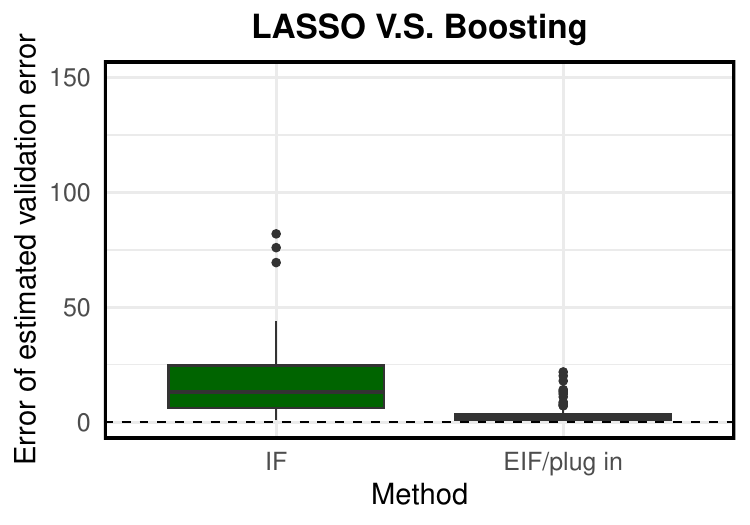}
        \end{minipage}        \\
        \begin{center}
        \text{(d) Nonlinear $\mu_0(x)$, $\mu_1(x)$, nonlinear $e(x)$} \\
        \end{center}
        \caption{
        Error of the estimated absolute (LASSO, Boosting)/relative (LASSO V.S. Boosting) error across three methods (plug in, IF, EIF) over four scenarios of the ACIC competition data ((a) to (d)). 
        Error of the estimated error is defined as the absolute difference between the estimated error and the corresponding oracle value.
        }
    \label{fig:ACIC.error.estimated.error}
\end{figure}

\end{document}